\documentclass[a4paper,11pt]{article}
\usepackage[sc]{mathpazo}
\linespread{1.04}         
\usepackage[T1]{fontenc}
\usepackage{amsthm,amsmath,amssymb,amsfonts,hyperref,a4wide}
\usepackage[usenames,dvipsnames]{xcolor} 
\usepackage{tikz}
\makeatletter
\tikzset{my loop/.style =  {to path={
  \pgfextra{}
  [looseness=12,min distance=6mm]
  \tikz@to@curve@path},font=\sffamily\small
  }}  
\makeatletter 
\usetikzlibrary{matrix,arrows,backgrounds,positioning,fit,shapes}

\newtheorem{theorem}{Theorem}[section]
\newtheorem{lemma}[theorem]{Lemma}

\newtheorem{proposition}[theorem]{Proposition}
\theoremstyle{definition}

\newtheorem{df}{Definition}[section]
\theoremstyle{remark}
\newtheorem{rem}{Remark}[section]
\newtheorem{ques}{Question}


\newcommand*{\Z}{\mathbb{Z}}

\newcommand*{\N}{\mathbb{N}}
\newcommand*{\C}{\mathbb{C}}


\newcommand*{\G}{\mathcal{G}}


\newcommand*{\ind}{\text{ind}}




\title{Computing the number of induced copies of a fixed graph in a bounded degree graph}
\author{ 
Viresh Patel\footnote{Korteweg de Vries Institute for Mathematics, University of Amsterdam. Email: \texttt{vpatel@uva.nl}. Supported by the Netherlands Organisation for Scientific Research (NWO) through the Gravitation Programme Networks (024.002.003).} \and Guus Regts\footnote{Korteweg de Vries Institute for Mathematics, University of Amsterdam. Email: \texttt{guusregts@gmail.com}. Supported by a personal NWO Veni grant}.}
\begin{document}
\maketitle
\abstract{In this paper we show that for any graph $H$ of order $m$ and any graph $G$ of order $n$  and maximum degree $\Delta$ one can compute the number of subsets $S$ of $V(G)$ that induces a graph isomorphic to $H$ in time $O(c^m \cdot n )$ for some constant $c = c(\Delta) >0$. This is essentially best possible.

\quad \\
\noindent 
\begin{footnotesize}
Keywords: induced graph, computational counting, fixed parameter tractability.
\end{footnotesize}
}

\section{Introduction}
For two graphs $H$ and $G$ we denote by $\ind(H,G)$ the number of subsets of the vertex set of $G$ that induce a graph that is isomorphic to $H$. (We recall that two graphs $H=(V_H,E_H)$ and $G=(V_G,E_G)$ are said to be \emph{isomorphic} if there exists a bijection $f:V_H \rightarrow V_G$ such that for any $u,v \in V_H$, we have that $f(u)f(v) \in E_G$ if and only if $uv \in E_H$.) Throughout we take $G$ and $H$ to have $n$ and $m$ vertices respectively.

Understanding the numbers $\ind(H,G)$ for different choices of $H$ gives us much important information about $G$. Determining these induced subgraph counts is closely related to determining subgraph counts and homomorphism counts; these parameters play a central role in the theory of graph limits \cite{L}, and frequently appear in statistical physics see e.g. Section 2.2 in~\cite{SS5} and the references therein.

When $H$ and $G$ are both part of the input, computing $\ind(H,G)$ is clearly an $NP$-hard problem because it includes the problem of determining the size of a maximum clique in $G$.
When the graph $H$ is fixed, the brute-force algorithm takes time $O(n^m)$ and improvements have been made for certain choices of the pattern graph $H$~\cite{KKM00, NP85}.

It is natural to consider this problem from a fixed parameter tractability (FPT) perspective.
In general computing $\ind(H,G)$ when parameterizing by $m = |H|$ is $W[1]$-hard because even deciding whether $G$ contains an independent set of size $m$ is $W[1]$-hard \cite{DF95}. 
Curticapean, Dell and Marx \cite{CDM17} prove a number of interesting dichotomy results for $W[1]$-hardness using the treewidth of $H$ (and of a certain class of graphs obtained from $H$) as an additional parameter.

However, when the graph $G$ is of bounded degree, which is often of interest in statistical physics, the problem is no longer $W[1]$-hard.
Indeed, Curticapean, Dell, Fomin, Goldberg, and Lapinskas~\cite[Theorem 13]{CDFGL17} showed that for a graph $H$ on $m$ vertices and a bounded degree graph $G$ on $n$ vertices, $\ind(H,G)$ can be computed in time $O(nm^{O(m)})$, thus giving an FPT algorithm in terms of $m$.
In the present paper we go further and give an algorithm with essentially optimal running time. We assume the standard word-RAM machine model with logarithmic-sized words.
\begin{theorem}\label{thm:main}
There is an algorithm which, given an $n$-vertex graph $G$ of maximum degree at most $\Delta$, and an $m$-vertex graph $H$ computes $\ind(H,G)$ in time $\tilde{O}(n(4\Delta)^{2m}+2^{10m})$. (Here the $\tilde{O}$-notation means that we suppress polynomial factors in $m$.)
\end{theorem}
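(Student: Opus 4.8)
\medskip
\noindent\textbf{Proof proposal.}
The plan is to split the computation into two essentially independent parts: extracting from $G$ a table of connected-induced-subgraph data, which is where the factor $n(4\Delta)^{2m}$ comes from, and then a purely combinatorial recombination that never looks at $G$, which is where the $2^{10m}$ term comes from. The starting observation is a \emph{component decomposition}: since $G[S]$ is an induced subgraph, its connected components are induced subgraphs of $G$ with no $G$-edge running between distinct components. Hence, if $H$ has connected components $D_1,\dots,D_t$ counted with multiplicity, then $\ind(H,G)$ equals $(\prod_D a_D!)^{-1}$ times the number of ordered tuples $(T_1,\dots,T_t)$ of vertex subsets of $G$ that are pairwise \emph{well separated} (disjoint, with no $G$-edge between them) and satisfy $G[T_j]\cong D_j$, where $a_D$ is the multiplicity of the type $D$. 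So it suffices to compute, for every sub-multiset $B\subseteq\{D_1,\dots,D_t\}$, the number $M(B)$ of well-separated tuples of copies indexed by $B$.

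For the $G$-heavy part I would first list all connected induced subgraphs of $G$ on at most $m$ vertices: in a graph of maximum degree $\Delta$ there are $O(n(e\Delta)^m)$ of these and they can be enumerated in time $\tilde O(n(e\Delta)^m)$ by a standard expansion procedure; each such set $W$ is then placed into a bucket indexed by the isomorphism type of $G[W]$ via a canonical form (the extra cost is sub-exponential in $m$, hence absorbed by $\tilde O$), yielding the lists $\mathcal L_D$ of sets inducing each connected $D$. To compute $M(B)$, inclusion--exclusion over the ``bad'' events $E_{ij}$ that two copies $T_i,T_j$ interact (share a vertex or have a $G$-edge between them) gives $M(B)=\sum_{F\subseteq K_B}(-1)^{e(F)}\#\{(T_j):T_i,T_j\text{ interact for all }ij\in E(F)\}$; grouping the subgraphs $F$ of $K_B$ by their partition into connected components rewrites this as $\sum_{\pi\vdash B}\prod_{P\in\pi}\widehat N(P)$ with $\widehat N(P)=\sum_{\Gamma\text{ connected on }P}(-1)^{e(\Gamma)}\#\{(T_j)_{j\in P}:\text{interaction graph}\supseteq\Gamma\}$ the usual truncated (Ursell-type) quantity. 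The point is that $\widehat N(P)$ vanishes unless the copies $T_j$, $j\in P$, together form a \emph{connected} induced subgraph $W$ of $G$ on at most $\sum_{j\in P}|D_j|\le m$ vertices, and then the contribution of $W$ depends only on how copies of the relevant types sit inside the closed neighbourhood $N[W]$, which has at most $m\Delta$ vertices. Running this local computation — a small dynamic program over the multiset of types used so far — for each of the $O(n(e\Delta)^m)$ sets $W$ at a cost of at most $(e\Delta)^m$ per set yields all the $\widehat N(P)$ in total time $\tilde O(n(e\Delta)^{2m})\le\tilde O(n(4\Delta)^{2m})$.

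It remains to recombine. Since $\widehat N(P)$ depends only on the multiset of component types in $P$, there are at most $2^m$ distinct values; computing $M(B)=\sum_{\pi\vdash B}\prod_{P\in\pi}\widehat N(P)$ for all $B$ is then a multiset analogue of subset convolution (a recursion over the partition lattice), and folding in the $\mathrm{Aut}$-factors to obtain $\ind(H,G)$ is a further $2^{O(m)}$ bookkeeping step; none of this involves $G$. Tracking the constants carefully throughout gives the bound $\tilde O(2^{10m})$ for this stage, and adding it to the $G$-heavy stage gives the claimed $\tilde O(n(4\Delta)^{2m}+2^{10m})$.

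The main obstacle is making the inclusion--exclusion run \emph{purely in $m$}, i.e.\ avoiding both the $2^{\Theta(m^2)}$ sum over all edge-supersets of $H$ and the $m^{\Theta(m)}$ sum over all set partitions of $V(H)$. This requires: (i) organising the inclusion--exclusion so that it ranges over partitions of the at most $m$ components rather than over edge-supersets, and exploiting that every surviving term is supported on a connected subgraph of $G$ of size at most $m$; (ii) evaluating the sign-weighted sums $\sum_{\Gamma\text{ connected}}(-1)^{e(\Gamma)}(\cdots)$, which a priori range over $2^{\Theta(m^2)}$ graphs $\Gamma$, in $2^{O(m)}$ time by collapsing them onto the few distinct interaction patterns occurring inside a fixed $N[W]$; and (iii) carrying out the isomorphism bucketing of the enumeration step and the multiset convolutions of the recombination step within the stated time in the word-RAM model. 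The identities invoked are the standard M\"obius and cluster-expansion relations; the real content is checking that each of these collapses genuinely happens and bounding the constants.
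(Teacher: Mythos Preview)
Your high-level architecture—reduce to connected pieces via a cluster/Ursell expansion, compute those locally over the $O(n(e\Delta)^m)$ connected induced subgraphs $W$, then recombine—is exactly the structural insight the paper exploits, but the proposal has a real gap at the step you yourself flag as obstacle~(ii).

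Concretely: for a fixed connected $W$ with $|W|\le m$ and a multiset $P$ of component types, your $\widehat N_W(P)=\sum_{(T_j):\bigcup T_j=W}\phi(I(T))$ requires ranging over tuples $(T_j)_{j\in P}$ of copies inside $W$ and evaluating the Ursell function $\phi$ of their interaction graph. The number of such tuples can be $m^{\Theta(m)}$ (e.g.\ when all $D_j$ are singletons and $|P|=m$), and each $\phi(I)$ is a sum over connected spanning subgraphs of a graph on up to $m$ vertices, which is $2^{\Theta(m^2)}$ terms. You assert this collapses to $(e\Delta)^m$ per $W$ via ``the few distinct interaction patterns'' and ``a small dynamic program over the multiset of types used so far'', but no such DP is described, and it is not clear what the states and transitions would be that avoid both blow-ups simultaneously. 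The recombination step likewise needs an actual algorithm: $M(B)=\sum_{\pi\vdash B}\prod_P\widehat N(P)$ has Bell-number many terms, and ``multiset analogue of subset convolution'' is a name, not a method—one needs to say which exponential-formula recursion is being used and why it runs in $2^{O(m)}$ on \emph{multisets}.

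The paper circumvents both issues by never computing Ursell functions at all. It packages the counts $\ind(\gamma\underline H,G)$ as coefficients of a multiplicative multivariate polynomial $Z_{\underline H}(G;x)$, passes to univariate evaluations $Z_\mu$, and uses Newton's identities: the inverse power sums $p_k$ of $Z_\mu$ are \emph{additive}, hence supported on connected $H$, and the identity $p_k=-ke_k-\sum e_ip_{k-i}$ gives a recursion in which the coefficient of each $\ind(G[S],\cdot)$ is computed by a sum over pairs $(U,T)$ with $U\cup T=S$—only $4^{|S|}\le 4^m$ terms. This replaces your unspecified local DP by a concrete $4^m$-per-$S$ recursion. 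Then, to extract the single coefficient $\ind(\rho\underline H,G)$ from the univariate data, the paper evaluates at $k=|\mathcal P_{m,r,h}|\le 2^{2m}$ carefully chosen points $\mu=\nu_1,\dots,\nu_k$ and inverts a $k\times k$ linear system; the existence of such points is guaranteed by Alon's Combinatorial Nullstellensatz. This linear-algebra step plays the role of your recombination, and the $2^{10m}$ in the theorem is the cost of finding the $\nu_i$ and solving the system. So while your plan and the paper's share the ``connectedness buys locality'' idea, the two mechanisms that actually achieve $2^{O(m)}$—Newton's identities for the local recursion and the Nullstellensatz for recombination—are absent from your proposal, and without them the claimed running time is not established.
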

\begin{rem}
Theorem 13 in \cite{CDFGL17} in fact concerns vertex-coloured graphs $H$ and $G$. 
Our proof of Theorem~\ref{thm:main} also easily extends to the coloured setting. 
We discuss this in Section~\ref{sec:conclusion}.
\end{rem}
The running time here is essentially optimal under the exponential time hypothesis. Indeed, if we could find an algorithm with an improved running time $O(c^{o(m)}poly(n))$ (for some constant $c$ possibly dependent on $\Delta$), we could use it to determine the size of a maximum independent set in time $O(nc^{o(n)}poly(n))=O(c^{o(n)}poly(n))$, which is not possible (even in graphs of maximum degree $3$) under the exponential time hypothesis; see \cite[Lemma 2.1]{JS98}.

Note that our algorithm allows us to compute $\ind(H,G)$ in polynomial time in $|G|$, even when $|H|$ is logarithmic in $|G|$ (and $G$ has bounded degree). The special case of this when $H$ is an independent set was a crucial ingredient in our recent paper~\cite{PR16}, which uses the Taylor approximation method of Barvinok~\cite{B17} to give (amongst others) a fully polynomial time approximation algorithm for evaluating the independence polynomial for bounded degree graphs. Our present paper completes the running time complexity picture for computing $\ind(H,G)$ on bounded degree graphs $G$.

We add a few remarks to give further perspective on the problem. Note that computing $\ind(H,G)$ in time $\tilde{O}(\Delta^m n)$ is relatively straightforward for $G$ of bounded degree $\Delta$ when $H$ is {\it connected} (see Lemma~\ref{lem:connected count}). Thus the difficulty lies in graphs $H$ that have many components. Note also that Curticapean et al.~\cite{CDFGL17} use the fact that induced graph counts can be expressed in terms of homomorphism counts (see e.g. \cite{L}) and that homomorphism counts from $H$ to $G$ can be computed in time $\tilde{O}(\Delta^m n)$ in their FPT algorithm. However the limiting factor is the time cost of expressing induced graph counts in terms of homomorphism graph counts, which is significantly larger than the running time of our algorithm.

Both our approach and the approach in \cite{CDFGL17} crucially use the bounded degree assumption.
It would be very interesting to know if Theorem~\ref{thm:main} could be extended to graphs of average bounded degree such as for example planar graphs. 
\begin{ques}
For which class of graphs $\mathcal C$ does there exist a constant $c=c(\mathcal C)$ and an algorithm such that given an $n$-vertex graph $G\in \mathcal C$ and an $m$-vertex graph $H$, the algorithm computes $\ind(H,G)$ in time $O(c^{m} poly(n))$? 
\end{ques}

\quad \\
{\bf Organization} The remainder of the paper is devoted to proving Theorem~\ref{thm:main}. 
The main idea in our proof is to define a multivariate graph polynomial where the coefficients are certain induced graph counts; in particular $\ind(H,G)$ will be the coefficient of a monomial. 
We then want to use machinery from \cite{PR16} to compute coefficients of univariate evaluations of this polynomial.
However, we need to slightly modify the result from \cite{PR16}. This will be done in the next section,  and in Section~\ref{sec:proof main}, we combine certain univariate evaluations to compute the coefficients of the multivariate polynomial. 

\section{Computing coefficients of graph polynomials}\label{sec:coef}
An efficient way to compute the coefficients of a large class of (univariate) graph polynomials for bounded degree graphs was given in \cite{PR16}. 
We will need a small modification of this result, for which we will provide the details here.
We start with some definitions after which we state the main result of this section.

By $\G$ we denote the collection of all graphs and by $\G_k$ for $k\in \N$ we denote the collection of graphs with at most $k$ vertices. 
A graph invariant is a function $f:\G\to S$ for some set $S$ that takes the same value on isomorphic graphs.
A (univariate) \emph{graph polynomial} is a graph invariant $p:\G\to \C[z]$, where $\C[z]$ denotes the ring of polynomials in the variable $z$ over the field of complex numbers.
Call a graph invariant $f$ \emph{multiplicative} if $f(\emptyset)=1$ and  $f(G_1\cup G_2)=f(G_1)f(G_2)$ for all graphs $G_1,G_2$ (here $G_1\cup G_2$ denotes the disjoint union of the graphs $G_1$ and $G_2$).
We can now give the key definition and tool we need from \cite{PR16}. 
\begin{df}\label{df:bigcp}
Let $p$ be a multiplicative graph polynomial defined by
\begin{equation}\label{eq:mult graph pol}
p(G)(z):=\sum_{i=0}^{d(G)} e_{i}(G)z^{i}
\end{equation}
for each $G\in \G$ with $e_0(G)=1$, where $d(G)$ is the degree of the polynomial $p(G)$.
We call $p$ a \emph{bounded induced graph counting polynomial (BIGCP)} if there exists $\alpha\in \N$ and a non-decreasing sequence $\beta \in \N^\N$ such that the following two conditions are satisfied:
\begin{itemize}
\item[(i)] for every graph $G$, the coefficients $e_i$ satisfy
\begin{equation}\label{eq:ind coef}
e_i(G):=\sum_{H\in \G_{\alpha i}}\lambda_{H,i}\ind(H,G)
\end{equation}
for certain $\lambda_{H,i}\in \C$;
\item[(ii)] for each $i$ and $H\in \G_{\alpha i}$, the coefficient $\lambda_{H,i}$  can be computed in time $\beta_i$.
\end{itemize}
\end{df}

We have the following result for computing coefficients of BIGCPs.
\begin{theorem}\label{thm:compute coef}
Let $n, m, \Delta\in \N$ and let $p(\cdot)$ be a bounded induced graph counting polynomial with parameters $\alpha$ and $\beta$.
Then there is a deterministic $\tilde{O}(n (e\Delta)^{\alpha m}\beta_m4^{\alpha m})
$-time algorithm, which, 
given any $n$-vertex graph $G$ of maximum degree at most $\Delta$, computes the first $m$ coefficients $e_1(G), \ldots, e_m(G)$ of $p(G)$. (Here the $\tilde{O}$-notation means that we suppress polynomial factors in $m$.)
\end{theorem}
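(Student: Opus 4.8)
The plan is to pass to $q(G):=\log p(G)$, where the multiplicativity of $p$ becomes additivity of the coefficients under disjoint union; additivity then forces those coefficients to be linear combinations of induced subgraph counts of \emph{connected} graphs only, and connected induced subgraphs of a bounded-degree graph are both few and easy to list. Having reduced to connected counts, I would compute the needed universal coefficients by brute force on the (small) connected subgraphs and then reassemble $e_1(G),\dots,e_m(G)$ by exponentiating a truncated power series.

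In more detail: since $e_0(G)=1$, the series $p(G)(z)=1+\sum_{i\ge1}e_i(G)z^i$ is a unit in $\C[[z]]$, so $q(G)(z)=\sum_{i\ge1}q_i(G)z^i$ is well defined, and as only $e_1(G),\dots,e_m(G)$ are wanted I would work throughout modulo $z^{m+1}$. Each $q_i(G)$ is a fixed universal polynomial in $e_1(G),\dots,e_i(G)$ (from $\log(1+u)=\sum_k\tfrac{(-1)^{k-1}}{k}u^k$), and conversely $p(G)(z)\equiv\exp\!\big(\sum_iq_i(G)z^i\big)\pmod{z^{m+1}}$, so moving between $(e_j(G))_{j\le m}$ and $(q_i(G))_{i\le m}$ costs only $\mathrm{poly}(m)$ arithmetic. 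The argument rests on two observations. First, $p(G_1\cup G_2)=p(G_1)p(G_2)$ gives $q(G_1\cup G_2)=q(G_1)+q(G_2)$, so each $q_i$ is additive under disjoint union. Second, substituting \eqref{eq:ind coef} into the universal polynomial and repeatedly applying the product rule for induced counts, $\ind(H_1,\cdot)\,\ind(H_2,\cdot)=\sum_{W}c^{W}_{H_1,H_2}\ind(W,\cdot)$ with $W$ ranging over graphs carrying a copy of $H_1$ and a copy of $H_2$ that together cover $V(W)$ (so $|V(W)|\le|V(H_1)|+|V(H_2)|$), and noting that each monomial of $q_i$ has $j_1+\dots+j_k=i$, yields $q_i(G)=\sum_{H\in\G_{\alpha i}}\mu_{H,i}\ind(H,G)$ for constants $\mu_{H,i}$ independent of $G$. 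Because $\ind(H,\cdot)$ is additive precisely when $H$ is connected, and because the invariants $\{\ind(H,\cdot)\}_{H\in\G}$ are linearly independent, comparison with the first observation forces $\mu_{H,i}=0$ for disconnected $H$ (the fact that additive elements of the induced-count algebra are spanned by connected graphs, as in \cite{PR16}). Hence
\[
q_i(G)=\sum_{\substack{H\ \text{connected}\\ |V(H)|\le\alpha i}}\mu_{H,i}\,\ind(H,G).
\]
This structural step — establishing the second observation with the vertex bound, and the vanishing on disconnected graphs — is the part I expect to need the most care.

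The algorithm then proceeds in three steps. \textbf{Step 1 (enumerate the connected pieces).} List every $S\subseteq V(G)$ with $|S|\le\alpha m$ for which $G[S]$ is connected. As $G$ has maximum degree at most $\Delta$, the number of connected $k$-subsets through a fixed vertex is at most $(e\Delta)^{k}$, so there are at most $n(e\Delta)^{\alpha m}$ such $S$, listable in time $\tilde{O}(n(e\Delta)^{\alpha m})$; bucketing by the isomorphism type of $G[S]$ (canonical forms of bounded-degree graphs on at most $\alpha m$ vertices are computable well within budget) produces $\ind(H,G)$ for every connected $H$ on at most $\alpha m$ vertices that occurs in $G$, all others having $\ind(H,G)=0$. \textbf{Step 2 (the local coefficients $\mu_{H,i}$).} Process the connected graphs $H$ from Step 1 in order of increasing $|V(H)|$; for each and each $i\le m$, compute $e_j(H)$ for $j\le i$ from $e_j(H)=\sum_{S\subseteq V(H),\,|S|\le\alpha j}\lambda_{H[S],j}$ (valid since $\ind(H'',H)=0$ unless $H''$ is isomorphic to an induced subgraph of $H$), at cost $\tilde{O}(2^{\alpha m}\beta_m)$ per $j$ by condition (ii) and the monotonicity of $\beta$; obtain $q_i(H)$ from these in $\mathrm{poly}(m)$ time; and since $q_i(H)=\sum_{H'\ \text{connected}}\mu_{H',i}\ind(H',H)$ with $\ind(H,H)=1$, extract
\[
\mu_{H,i}=q_i(H)-\sum_{\substack{H'\ \text{a proper connected}\\ \text{induced subgraph of}\ H}}\mu_{H',i}\,\ind(H',H),
\]
a sum of at most $2^{\alpha m}$ terms all of whose ingredients ($\ind(H',H)$, tallied alongside the $e_j(H)$, and $\mu_{H',i}$ for the smaller, already-processed graphs $H'$) are on hand. \textbf{Step 3 (assemble).} Substitute the $\mu_{H,i}$ and the tallies $\ind(H,G)$ from Step 1 into the displayed identity for $q_i(G)$ to obtain $q_1(G),\dots,q_m(G)$, and exponentiate to recover $e_1(G),\dots,e_m(G)$.

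As for the running time, Step 1 costs $\tilde{O}(n(e\Delta)^{\alpha m})$ and Step 3 is negligible; in Step 2 there are at most $n(e\Delta)^{\alpha m}$ connected graphs $H$ (no more than the subsets enumerated in Step 1), and for each, and each of the $m$ values of $i$, one spends $\tilde{O}(2^{\alpha m}\beta_m)$ forming the $e_j(H)$ and $\tilde{O}(2^{\alpha m})$ running the recursion for $\mu_{H,i}$, giving a total of $\tilde{O}\big(n(e\Delta)^{\alpha m}\cdot 2^{\alpha m}\beta_m\cdot 2^{\alpha m}\big)=\tilde{O}\big(n(e\Delta)^{\alpha m}\beta_m4^{\alpha m}\big)$, as required. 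Everything but the structural fact above is enumeration over bounded-degree graphs together with routine power-series bookkeeping, so that is where I would focus the effort.
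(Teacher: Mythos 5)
Your central structural step is identical to the paper's: the inverse power sums $p_k$ (equivalently your $q_k=-p_k/k$ from the truncated log) are additive, so by Lemma~\ref{lem:additivity} their $\ind$-expansions are supported on connected graphs, reducing everything to enumerating connected induced subgraphs. Where you differ is in how the coefficients $\mu_{H,i}$ are then obtained. The paper never evaluates $p$ or $q$ at any graph; it runs the Newton recursion $p_k=-ke_k-\sum_{i<k}e_ip_{k-i}$ as an identity between $\ind$-linear combinations, extracting the coefficient of $\ind(G[S],\cdot)$ in each product $p_{k-i}e_i$ by summing $a_{G[T],\,k-i}\,\lambda_{G[U],\,i}$ over the at most $4^{\alpha k}$ pairs $(U,T)$ with $U\cup T=S$. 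You instead evaluate $q_i$ numerically at each small connected $H=G[S]$ (computing $e_j(H)$ brute-force from condition (ii) and taking the truncated logarithm) and recover $\mu_{H,i}$ by M\"obius inversion over the induced-subgraph poset. Both routes are correct and meet the stated running time; your inversion is conceptually a little cleaner, while the paper's symbolic covering recursion avoids recomputing $e_j(H)$ anew for every $H$.

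Two details in your write-up should be repaired. First, ``bucketing by isomorphism type'' quietly presupposes a canonical form for bounded-degree graphs, which you would have to supply; this is avoidable by indexing the coefficients by vertex subsets rather than isomorphism types --- store $a_{S,i}:=\mu_{G[S],i}$ and run the inversion in the form $a_{S,i}=q_i(G[S])-\sum_{T\subsetneq S,\ G[T]\text{ connected}}a_{T,i}$, which automatically accounts for the multiplicity $\ind(H',H)$ because each $T\subsetneq S$ with $G[T]\cong H'$ contributes one copy of $\mu_{H',i}$ --- and this is precisely the device used in the paper. Second, a small arithmetic slip: in your Step-2 cost the two $2^{\alpha m}$ factors are incurred sequentially, so they should be added, not multiplied; the resulting bound is $\tilde{O}(n(e\Delta)^{\alpha m}\beta_m 2^{\alpha m})$, which is in fact slightly better than, and in particular consistent with, the claimed $\tilde{O}(n(e\Delta)^{\alpha m}\beta_m 4^{\alpha m})$.
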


\begin{rem}
The algorithm in the theorem above only has access to the polynomial $p$ via condition (ii) in the definition of BIGCP, that is, it relies only on the algorithm which computes the complex numbers $\lambda_{H,i}$.
\end{rem}

Before we prove Theorem \ref{thm:compute coef} we will first gather some facts from \cite{PR16} about induced subgraph counts and the number of connected induced subgraphs of fixed size that occur in a graph.
Compared to \cite{PR16} we actually need to slightly sharpen the statements. 

\subsection{Induced subgraph counts}
Define $\ind(H,\cdot):\G\to \C$ by $G\mapsto \ind(H,G)$. 
So we view $\ind(H,\cdot)$ as a graph invariant.
We can take linear combinations and products of these invariants.
In particular, for two graphs $H_1,H_2$ we have
\begin{equation}\label{eq:product of ind}
\ind(H_1,\cdot)\cdot \ind(H_2,\cdot)=\sum_{H\in \G}c^H_{H_1,H_2}\ind(H,\cdot),
\end{equation}
where for a graph $H$, $c^H_{H_1,H_2}$ is the number of pairs of subsets of $V(H)$, $(U,T)$, such that $U \cup T = V(H)$ and $H[U]=H_1$ and $H[T]=H_2$.
In particular, given $H_1$ and $H_2$, $c^H_{H_1,H_2}$ is nonzero for only a finite number of graphs $H$.

In what follows we will often have to maintain a list $L$ of subsets $S$ of $[n]$ with $|S|\leq k$ (for some $k$) as well as some (complex) number $c_S$ associated to $S$.
We will use the standard word-RAM machine model with logarithmic-sized words. This means that given a set $S$ of size $k$, we have access to $c_S$ in $O(k)$ time. 
In particular, this also means we can determine whether $S$ is contained in our list in $O(k)$ time. 

The next lemma says that computing $\ind(H,G)$ is fixed parameter tractable (and moreover gives an essentially optimal running time) when $G$ has bounded degree and $H$ is connected.
\begin{lemma}\label{lem:connected count}
Let $H$ be a connected graph on $k$ vertices and let $\Delta\in \N$. Then
\begin{itemize}
\item[(i)] there is an $O(n\Delta^{k-1})$-time algorithm, which, given any $n$-vertex graph $G$ with maximum degree at most $\Delta$, checks whether $\ind(H,G)\neq 0$;
\item[(ii)] there is an $O(n \Delta^{k-1}k)$-time algorithm, which, given any $n$-vertex graph $G$ with maximum degree at most $\Delta$, computes the number $\ind(H,G)$.
\end{itemize}
\end{lemma}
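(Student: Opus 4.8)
The plan is to exploit two structural facts: first, that an induced copy of a connected graph $H$ in $G$ must live within a connected subgraph of $G$; and second, that in a bounded-degree graph the number of connected vertex subsets of size $k$ containing a fixed vertex is $O(\Delta^{k-1})$, and these can be enumerated in the same time. Concretely, I would first establish (or recall from \cite{PR16}) the enumeration statement: for a fixed vertex $v$ of $G$, a standard branching/BFS argument shows there are at most $(e\Delta)^{k-1}$ (indeed $O(\Delta^{k-1})$ when one is slightly careful) connected subsets $S \ni v$ with $|S| = k$, and one can list all of them in time $O(\Delta^{k-1})$ by growing connected sets one vertex at a time along edges of $G$, using the word-RAM model to test membership and adjacency in $O(k) = O(1)$ (for fixed $k$) time per operation. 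Summing over all $n$ choices of $v$ and dividing by the multiplicity gives a list of all connected $k$-subsets of $V(G)$ in time $O(n\Delta^{k-1})$, with each appearing $O(k)$ times, or we can dedupe using the list structure.

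For part (i), I would iterate over this list of connected $k$-subsets $S$, and for each one test whether $G[S] \cong H$. Since $H$ is connected, every induced copy of $H$ in $G$ is a connected $k$-subset, so $\ind(H,G) \neq 0$ if and only if some $S$ in the list satisfies $G[S] \cong H$. The isomorphism test for two graphs on $k$ vertices takes time depending only on $k$ (e.g.\ brute force over $k!$ bijections, or faster), which is absorbed into the $\tilde O$ or treated as a constant since $\Delta, k$ control the bound; here the statement is $O(n\Delta^{k-1})$, so I need the per-set cost to be $O(1)$ in $n$ — which it is, as it depends only on $k$. The moment we find one such $S$ we may stop and report "nonzero"; otherwise after exhausting the list we report "zero". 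For part (ii), instead of stopping early we count: we maintain a running total and increment it by $1$ each time $G[S] \cong H$; the extra factor of $k$ in the claimed bound $O(n\Delta^{k-1}k)$ comes from the bookkeeping of reading/writing the $k$-element sets and the counter on the word-RAM. Since each unordered $k$-subset $S$ is a vertex set, it is counted exactly once this way, so the total equals $\ind(H,G)$.

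The one point that needs care — and which I expect to be the only real obstacle — is making the enumeration of connected $k$-subsets both correct (no omissions, controlled overcounting) and tight enough to hit $O(n\Delta^{k-1})$ rather than a weaker bound like $O(n(e\Delta)^k)$. The natural approach is: root at each $v$, and generate connected sets by a DFS that at each step picks the next vertex from the neighbourhood of the current set; a clean way to avoid generating the same set many times is to fix an order in which vertices are added (e.g.\ always extend via the lexicographically-least boundary edge, à la the standard "spanning-tree with canonical ordering" trick), which yields each connected set at most a bounded number of times and keeps the branching factor at $\Delta$ with depth $k-1$. I would present this enumeration as the technical core, cite the analogous statement in \cite{PR16} for the counting bound, and then parts (i) and (ii) follow immediately by the two-line arguments above.
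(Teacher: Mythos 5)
Your proposed algorithm — enumerate all connected $k$-subsets of $V(G)$, then test each for isomorphism with $H$ — is a genuinely different route from the paper's, and unfortunately it does not achieve the stated running time; there are two separate gaps.

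The first gap is the enumeration cost. You claim that, with care, one can enumerate the connected $k$-subsets containing a fixed vertex $v$ in time $O(\Delta^{k-1})$. This is not possible: the number of such subsets can itself be $\Theta((e\Delta)^{k-1})$ (the Borgs--Chayes--Kahn--Lov\'asz bound the paper cites as Lemma~\ref{lem:graph count} is essentially tight, e.g.\ on a $\Delta$-regular tree), so any algorithm that outputs the full list must spend $\Omega((e\Delta)^{k-1})$ time per root vertex. The paper's own enumeration lemma (Lemma~\ref{lem:enumerate}) accordingly only claims $O(nk^3(e\Delta)^k)$, with the $e^k$ present. Your canonical-spanning-tree trick does not reduce this: fixing a canonical ordering controls over-counting of a given set, but the number of sets themselves is already $\Theta(n(e\Delta)^{k-1})$, so the branching factor cannot be capped at $\Delta$ per level. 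Thus even a perfect enumeration of connected $k$-subsets would overshoot the $O(n\Delta^{k-1})$ bound by a factor exponential in $k$.

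The second gap is the per-set isomorphism test. You propose brute force over $k!$ bijections (``or faster''), treating this as $O(1)$ since it is independent of $n$. But the bound in the lemma is $O(n\Delta^{k-1})$ and $O(n\Delta^{k-1}k)$, with an explicit dependence on $k$ and $\Delta$; a $k!$ factor is not absorbed. Worse, the paper later uses this very lemma (part (i)) as its subroutine for isomorphism testing between bounded-degree graphs, so assuming a fast isomorphism oracle here would be circular in the context of the paper.

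The paper's proof avoids both problems with a more targeted idea: it does not enumerate connected subsets of $G$ and then recognise $H$ among them; it directly enumerates candidate \emph{embeddings} of $H$ into $G$. One fixes an ordering $v_1,\dots,v_k$ of $V(H)$ in which every $v_i$ with $i\geq 2$ has a neighbour among $v_1,\dots,v_{i-1}$ (possible since $H$ is connected). One then picks the image of $v_1$ in $n$ ways, and for each subsequent $v_i$ the image must be a neighbour (in $G$) of the already-placed image of a prior $H$-neighbour of $v_i$, giving at most $\Delta$ choices. This yields exactly $n\Delta^{k-1}$ candidate maps, each checked directly for being an induced embedding, with no isomorphism oracle needed and no $e^k$ loss. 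Part (ii) then follows by deduplicating the vertex sets of successful embeddings via a list, giving the extra factor $k$ for set lookups. If you want to salvage your approach, you would need to replace the connected-subset enumeration by this ordered-embedding enumeration, at which point you are essentially reproducing the paper's argument.
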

Note that Lemma~\ref{lem:connected count} (i) enables us to test for graph isomorphism between bounded degree graphs when $|V(G)| = |V(H)|$.
\begin{proof}
We follow the proof from \cite{PR16}.
We assume that $V(G)=[n]$.
Let us list the vertices of $V(H)$, $v_1,\ldots,v_k$ in such a way that for $i\geq 1$ vertex $v_i$ has a neighbour among $v_1,\ldots,v_{i-1}$.
Then to embed $H$ into $G$ we first select a target vertex for $v_1$ and then given that we have embedded $v_1, \ldots, v_{i-1}$ with $i\geq 2$ there are at most $\Delta$ choices for where to embed $v_i$. After $k$ iterations, we have a total of at most $n\Delta^{k-1}$ potential ways to embed $H$ and each possibility is checked in the procedure above. Hence we determine if $\ind(H,G)$ is zero or not in $O(n\Delta^{k-1})$ time.

Throughout the procedure above we maintain a list $L$ that contains all sets $S$ such that $G[S]=H$ found thus far.
Each time we find a set $S\subset [n]$ such that $G[S]=H$ we check if it is contained in $L$.
If this is not the case we add $S$ to $L$ and we discard $S$ otherwise.
The length of the resulting list gives the value of $\ind(H,G)$.
\end{proof}

Next we consider how to enumerate all possible connected induced subgraphs of fixed size in a bounded degree graph.
We will need the following result of Borgs, Chayes, Kahn, and Lov\'asz \cite[Lemma 2.1]{BCKL12}:
\begin{lemma}\label{lem:graph count}
Let $G$ be a graph of maximum degree $\Delta$. Fix a vertex $v_0$ of $G$.
Then the number of connected induced subgraphs of $G$ with $k$ vertices containing the vertex $v_0$ is at most $\frac{(e\Delta)^{k-1}}{2}$.
\end{lemma}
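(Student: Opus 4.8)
The plan is to pass from connected induced subgraphs to \emph{subtrees}, and then to estimate the number of subtrees by a generating-function argument. First, every connected induced subgraph $G[S]$ with $v_0\in S$ and $|S|=k$ has a spanning tree, which is a subgraph of $G$ that happens to be a tree on vertex set $S$; picking one such tree for each $S$ yields an injection from the family we want to count into the set of all tree-subgraphs of $G$ on $k$ vertices containing $v_0$ (injectivity holds because $S$ is recovered as the vertex set of the tree). So it suffices to bound the number $N_k$ of tree-subgraphs of $G$ on $k$ vertices containing $v_0$; rooting each at $v_0$, this is the same as the number of rooted subtrees of $G$ at $v_0$ with $k$ vertices.

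Next I would bound $N_k$ recursively. Let $t(k)$ be the largest number of rooted subtrees with $k$ vertices that the root of a maximum-degree-$\le\Delta$ graph can have, and set $\hat t(x)=\sum_{k\ge1}t(k)x^k$. Since the children of the root form a subset of its at most $\Delta$ neighbours and a rooted subtree hangs below each child (and below the root each vertex really has one fewer neighbour available, a saving we simply discard), decomposing at the root and ordering its children by label gives the coefficientwise inequality $\hat t(x)\preceq x\bigl(1+\hat t(x)\bigr)^{\Delta}$. Hence $\hat t\preceq B$ coefficientwise, where $B=x(1+B)^{\Delta}$ with $B(0)=0$, and by Lagrange inversion $[x^k]B=\tfrac1k[y^{k-1}](1+y)^{\Delta k}=\tfrac1k\binom{\Delta k}{k-1}$, a Fuss--Catalan number. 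Therefore $N_k\le t(k)\le\tfrac1k\binom{\Delta k}{k-1}$.

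The last step is to check $\tfrac1k\binom{\Delta k}{k-1}\le\tfrac{(e\Delta)^{k-1}}{2}$ for $k\ge2$. From $\binom{n}{j}\le(en/j)^j$ with $n=\Delta k$ and $j=k-1$, together with $(1+\tfrac1{k-1})^{k-1}\le e$, one gets $\tfrac1k\binom{\Delta k}{k-1}\le\tfrac ek(e\Delta)^{k-1}$, which is at most $\tfrac{(e\Delta)^{k-1}}{2}$ once $k\ge2e$, i.e.\ $k\ge6$; for each of the finitely many values $k\in\{2,3,4,5\}$ one is left with a polynomial-in-$\Delta$ inequality valid for all $\Delta\ge1$ (for instance when $k=2$ the left side equals $\Delta$ and the right side $e\Delta/2$), which one verifies directly. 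I expect this estimate to be the only delicate part: the inequality is essentially tight for small $k$, so the Fuss--Catalan prefactor $\tfrac1k$ cannot be thrown away — a coarser Euler-tour encoding of the subtrees, counting Dyck paths and paying a Catalan factor, would give only $(4\Delta)^{k-1}$ — and the small cases must be handled by hand. One should also make sure the domination $\hat t\preceq B$ really is an inequality, i.e.\ that the infinite $\Delta$-regular tree is the extremal instance.
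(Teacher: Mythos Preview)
The paper does not prove this lemma at all: it simply quotes it from Borgs, Chayes, Kahn and Lov\'asz~\cite[Lemma~2.1]{BCKL12}. So there is no ``paper's proof'' to compare against, and your task was really to supply one. Your argument is correct and is in fact essentially the argument in \cite{BCKL12}: inject connected induced subgraphs into rooted subtrees via a spanning tree, dominate coefficientwise by the functional equation $B=x(1+B)^\Delta$ for the infinite $\Delta$-ary tree, extract the Fuss--Catalan number $\tfrac1k\binom{\Delta k}{k-1}$ by Lagrange inversion, and then estimate. The domination $\hat t\preceq B$ that you flag as needing care follows by a straightforward induction on $k$, since all coefficients involved are nonnegative and the recursion for $t(k)$ only uses $t(j)$ with $j<k$.

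Two minor remarks. First, the statement as written in the paper is actually false for $k=1$ (one connected induced subgraph versus a claimed bound of $1/2$); you silently restrict to $k\ge2$ in your final step, which is the right thing to do and is how the bound is used downstream (e.g.\ in Lemma~\ref{lem:enumerate} only the weaker bound $n(e\Delta)^{k-1}$ is needed). Second, your small-case check for $k\in\{2,3,4,5\}$ goes through uniformly in $\Delta\ge1$ simply because $\tfrac1k\binom{\Delta k}{k-1}\le\tfrac{(k\Delta)^{k-1}}{k!}$, and one checks $k^{k-1}/k!\le e^{k-1}/2$ numerically for those four values; you might say this rather than leave it as ``one verifies directly''.
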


As a consequence we can efficiently enumerate all connected induced subgraphs of logarithmic size that occur in a bounded degree
graph $G$.
\begin{lemma}\label{lem:enumerate}
There is a $O(nk^3(e\Delta)^{k})$-time algorithm which, given $k \in \mathbb{N}$ and an $n$-vertex graph $G$ on $[n]$ of maximum degree $\Delta$, outputs $\mathcal{T}_k$, the list of all $S \subseteq [n]$ satisfying $|S| \leq k$ and $G[S]$ connected.
\end{lemma}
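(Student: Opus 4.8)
The plan is to enumerate the connected induced subgraphs in order of increasing size, growing each by one vertex at a time. First I would set $\mathcal{T}^{(1)} := \{\{v\} : v \in [n]\}$ (each singleton induces a connected graph), and then for $j = 1, \dots, k-1$ build $\mathcal{T}^{(j+1)}$ from $\mathcal{T}^{(j)}$ as follows: for every $S \in \mathcal{T}^{(j)}$ and every vertex $u$ lying in the open neighbourhood of $S$ but outside $S$, the set $S \cup \{u\}$ induces a connected graph on $j+1$ vertices; one then checks whether $S \cup \{u\}$ already occurs in the (partially built) list $\mathcal{T}^{(j+1)}$, appending it if not. The output is $\mathcal{T}_k := \bigcup_{j=1}^{k} \mathcal{T}^{(j)}$. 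Throughout, sets are stored canonically (e.g.\ as sorted tuples) so that, in the word-RAM model with logarithmic words as described above, forming $S \cup \{u\}$ and testing membership in the current list cost time polynomial in $k$ (in fact $O(k)$).

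For correctness I would argue by induction on $j$ that $\mathcal{T}^{(j)}$ is exactly the list of $S \subseteq [n]$ with $|S| = j$ and $G[S]$ connected. Everything placed in $\mathcal{T}^{(j+1)}$ has the form $S \cup \{u\}$ with $G[S]$ connected and $u$ adjacent to $S$, hence induces a connected graph; conversely, if $G[S]$ is connected with $|S| = j+1 \ge 2$, pick a vertex $u \in S$ that is not a cut vertex of $G[S]$ (for instance a leaf of a spanning tree of $G[S]$); then $G[S \setminus \{u\}]$ is connected, so $S \setminus \{u\} \in \mathcal{T}^{(j)}$ by induction, and $u$ is a neighbour of $S \setminus \{u\}$, so $S$ is produced when $S \setminus \{u\}$ is processed. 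The membership check prevents duplicates.

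For the running time, Lemma~\ref{lem:graph count} gives $|\mathcal{T}^{(j)}| \le \tfrac{n}{2}(e\Delta)^{j-1}$, since each connected induced subgraph on $j$ vertices containing a fixed vertex $v$ is one of at most $\tfrac12(e\Delta)^{j-1}$ such, and summing over the (at most $n$) choices of $v$. Processing one $S \in \mathcal{T}^{(j)}$ costs $O(j\Delta)$ time to scan the $\le \Delta$ neighbours of each of the $\le j$ vertices of $S$, producing at most $j\Delta$ candidate pairs $(S, u)$; each candidate costs a further $O(k)$ time to form $S \cup \{u\}$ and perform the membership test/insertion. Hence the work at level $j$ is $O\big(|\mathcal{T}^{(j)}|\cdot j\Delta \cdot k\big) = O\big(n k^2 \Delta (e\Delta)^{j-1}\big)$, and since $e\Delta \ge e > 2$ the geometric-type sum over $j = 1, \dots, k-1$ is $O\big(n k^2 \Delta (e\Delta)^{k-1}\big) = O\big(n k^2 (e\Delta)^{k}\big)$, comfortably within the claimed $O(nk^3(e\Delta)^k)$.

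I do not expect a genuine obstacle here: the only points requiring a little care are selecting a non-cut vertex so the induction for completeness goes through, and noting that although each connected subgraph is generated with multiplicity (once per non-cut-vertex parent, and once per shared neighbour), this multiplicity is harmless because Lemma~\ref{lem:graph count} caps the sizes of the $\mathcal{T}^{(j)}$ and the cheap membership test absorbs the redundant candidates. The slack between our $O(nk^2(e\Delta)^k)$ bound and the stated $O(nk^3(e\Delta)^k)$ leaves ample room for the data-structure bookkeeping.
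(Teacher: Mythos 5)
Your proof is correct and takes essentially the same approach as the paper: it builds the lists $\mathcal{T}^{(j)}$ level by level, extending each connected set by a vertex in its neighbourhood and using Lemma~\ref{lem:graph count} to bound the list sizes, which is exactly the paper's construction. You additionally spell out the completeness argument (removing a non-cut vertex, e.g.\ a leaf of a spanning tree) that the paper defers to \cite{PR16}, and your geometric-sum analysis gives the slightly sharper bound $O(nk^2(e\Delta)^k)$ in place of the paper's $O(nk^3(e\Delta)^k)$.
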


\begin{proof}
We assume that $V(G)=[n]$.
By the previous result, we know that $|\mathcal{T}_k| \leq n(e \Delta)^{k-1}$ for all $k$.

We inductively construct $\mathcal{T}_k$. For $k=1$, $\mathcal{T}_k$ is clearly the set of singleton vertices and takes time $O(n)$ to output.

Given that we have found $\mathcal{T}_{k-1}$ we compute $\mathcal{T}_k$ as follows. 
We iteratively compute $\mathcal{T}_{k}$ by going over all $S\in \mathcal{T}_{k-1}$ going over all $v\in N_G(S)$ (the collection of vertices that are connected to an element of $S$)  and checking whether $S\cup\{v\}$ is already contained in $\mathcal{T}_{k}$ or not. 
We add it to $\mathcal{T}_k$ if it is not already contained in $\mathcal{T}_k$.
 
The set $N_G(S)$ has size at most $|S|\Delta \leq k\Delta$ and takes time 
$O(k\Delta)$ to find (assuming $G$ is given in adjacency list form). 
Therefore computing $\mathcal{T}_k$ takes time bounded by $O(|\mathcal{T}_{k-1}|k^2\Delta) = O(nk^2(e\Delta)^{k})$.
 
Starting from $\mathcal{T}_1$, we perform the above iteration $k$ times, requiring a total running time of $O(nk^3(e\Delta)^{k})$.
The  proof that $\mathcal{T}_k$ contains all the sets we desire is straightforward and can be found in \cite{PR16}. 
\end{proof}

We call a graph invariant $f:\G\to \C$ \emph{additive} if for each $G_1,G_2\in \G$ we have  $f(G_1\cup G_2)=f(G_1)+f(G_2)$.
The following lemma is a variation of a lemma due to Csikv\'ari and Frenkel \cite{CF12}; it is fundamental to our approach. See \cite{PR16} for a proof.
\begin{lemma}\label{lem:additivity}
Let $f:\G\to \C$ be a graph invariant given by $f(\cdot):=\sum_{H\in \G}a_H\ind(H,\cdot)$. 
Then $f$ is additive if and only if $a_H=0$ for all graphs $H$ that are disconnected. 
\end{lemma}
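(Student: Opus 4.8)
The plan is to prove both directions of the equivalence by understanding how the operation of disjoint union interacts with the basis $\{\ind(H,\cdot) : H \in \G\}$.

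\emph{Sufficiency.} Suppose $a_H = 0$ for every disconnected $H$, so $f(\cdot) = \sum_{H \text{ connected}} a_H \ind(H,\cdot)$ (together with possibly a term for the empty graph, which contributes $0$ to both sides of additivity). It then suffices to check additivity for a single connected graph invariant $\ind(H,\cdot)$ with $H$ connected. Given graphs $G_1, G_2$, a subset $S \subseteq V(G_1 \cup G_2)$ with $(G_1 \cup G_2)[S] \cong H$ must, because $H$ is connected, have $S$ entirely contained in $V(G_1)$ or entirely in $V(G_2)$ (a connected graph cannot be split across the two parts of a disjoint union). Hence $\ind(H, G_1 \cup G_2) = \ind(H, G_1) + \ind(H, G_2)$, and additivity follows by linearity.

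\emph{Necessity.} Suppose $f = \sum_H a_H \ind(H,\cdot)$ is additive but some disconnected $H$ has $a_H \neq 0$; I would derive a contradiction. The key point is that the invariants $\ind(H,\cdot)$ are \emph{linearly independent} as functions on $\G$: this follows from a standard triangularity argument, since $\ind(H, H) \geq 1$ while $\ind(H, H') = 0$ whenever $H'$ has strictly fewer vertices than $H$ (or, more carefully, one orders graphs so that the "evaluation matrix" $(\ind(H, H'))_{H, H'}$ is triangular with nonzero diagonal). Now consider the function $g(G_1, G_2) := f(G_1 \cup G_2) - f(G_1) - f(G_2)$, which is identically zero by hypothesis. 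Using \eqref{eq:product of ind}-type bookkeeping, or more directly the fact that $\ind(H, G_1 \cup G_2) = \sum_{H = H_1 \sqcup H_2} \ind(H_1, G_1)\ind(H_2, G_2)$ where the sum is over ways to write $H$ as a disjoint union of an induced subgraph on part of its components and the complementary one, one expands $f(G_1 \cup G_2)$ in the basis $\{\ind(H_1, \cdot) \otimes \ind(H_2, \cdot)\}$ of invariants of pairs. Pick a disconnected $H$ with $a_H \neq 0$ of minimum order; write $H = H_1 \sqcup H_2$ with both $H_i$ nonempty. The coefficient of $\ind(H_1, \cdot)\ind(H_2, \cdot)$ in the expansion of $f(G_1 \cup G_2)$ picks up the contribution $a_H$ (plus contributions only from larger disconnected graphs, which by minimality we can control, or which land on different basis elements), whereas $f(G_1) + f(G_2)$ — being a sum of invariants each depending on only one of $G_1, G_2$ — has zero coefficient on the genuinely "mixed" basis element $\ind(H_1,\cdot)\ind(H_2,\cdot)$ with both factors nonempty. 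By linear independence of these product invariants on $\G \times \G$, this forces $a_H = 0$, a contradiction.

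The main obstacle is making the necessity direction's bookkeeping precise: one must choose the right partial order on graphs (by number of vertices, refined by number of components) so that the expansion of $\ind(H, G_1 \cup G_2)$ in terms of pairs $(\ind(H_1,\cdot), \ind(H_2,\cdot))$ is triangular, isolating a "leading" mixed term for each disconnected $H$ that cannot be cancelled by the additive part $f(G_1) + f(G_2)$ nor by higher-order terms. Once the triangular structure and the linear independence of $\{\ind(H_1,\cdot)\ind(H_2,\cdot)\}$ over pairs of graphs are in hand, the contradiction is immediate. Since the excerpt defers to \cite{PR16} for the proof, I would either reproduce this argument in full or simply cite it; given the statement is labelled as a known variation of Csikv\'ari–Frenkel, citing suffices.
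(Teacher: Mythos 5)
Your two-direction plan --- sufficiency from the fact that a connected induced subgraph cannot straddle a disjoint union, necessity from linear independence after expanding $\ind(H,G_1\cup G_2)$ over tensor products --- is the right one (the paper defers to \cite{PR16}, which argues along these lines). Two details need repair. In the sufficiency direction, the parenthetical that an empty-graph term ``contributes $0$ to both sides of additivity'' is false: by the paper's definition $\ind(\emptyset,G)=1$ for every $G$, so $\ind(\emptyset,\cdot)\equiv 1$ is \emph{not} additive, and the term is innocuous only because the empty graph must be counted as disconnected for the lemma to hold, forcing $a_\emptyset=0$ by hypothesis.

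In the necessity direction, the minimality argument and the talk of ``contributions from larger disconnected graphs'' are unnecessary, and your phrasing of the key identity risks the wrong multiplicities. Since $G_1\cup G_2$ has no edges between its two parts, every $S\subseteq V(G_1\cup G_2)$ splits uniquely as $S_1\sqcup S_2$ with $S_i\subseteq V(G_i)$, which gives
\[
\ind(H,G_1\cup G_2)=\sum_{(H_1,H_2):\,H_1\sqcup H_2\cong H}\ind(H_1,G_1)\ind(H_2,G_2),
\]
the sum running over \emph{ordered pairs of isomorphism types}, each taken once --- not over partitions of the components of $H$, which would overcount (for $H=2K_2$ and $G_1=G_2=2K_2$ the left side is $6$, which the right side matches only without multiplicities). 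With this identity in hand, the coefficient of $\ind(H_1,\cdot)\ind(H_2,\cdot)$ in the expansion of $f(G_1\cup G_2)$ is \emph{exactly} $a_{H_1\sqcup H_2}$, whereas in $f(G_1)+f(G_2)$ it is $0$ whenever both $H_1,H_2$ are nonempty. Linear independence of these products on $\G\times\G$ (the triangularity argument you sketch, tensored) then yields $a_{H_1\sqcup H_2}=0$ for all nonempty $H_1,H_2$ directly, hence $a_H=0$ for every disconnected $H$; no induction on $|V(H)|$ is needed.
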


The next proposition is a variant of the Newton identities that relate the inverse power sums and the coefficients of a polynomial. We refer to \cite{PR16} for a proof.
\begin{proposition}
\label{pr:newton}
Let $p(z) = a_0 + \cdots + a_dz^d$ be a polynomial of degree $d$ with complex roots $\zeta_1, \ldots, \zeta_d$. Define $p_j:= \zeta_1^{-j} +  \cdots + \zeta_d^{-j}$. Then for each $k=1,2, \ldots$, we have 
\[
ka_k= -\sum_{i=0}^{k-1}a_ip_{k-i}.
\]
(Here we take $a_i=0$ if $i>d$.)
\end{proposition}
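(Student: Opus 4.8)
Since $p_j=\zeta_1^{-j}+\cdots+\zeta_d^{-j}$ presupposes that none of the roots vanishes, i.e. $a_0=p(0)\neq 0$, I will take this as a standing hypothesis. The plan is to read the identity off from the logarithmic derivative of $p$ expanded as a power series around $0$.

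First I would factor $p(z)=a_d\prod_{i=1}^d(z-\zeta_i)$ and take the logarithmic derivative to get $p'(z)/p(z)=\sum_{i=1}^d(z-\zeta_i)^{-1}$. For $|z|<\min_i|\zeta_i|$ (or, equally well, formally in $\C[[z]]$, which is legitimate because $p(0)\neq 0$) each summand expands as a geometric series $(z-\zeta_i)^{-1}=-\sum_{\ell\ge 0}\zeta_i^{-\ell-1}z^\ell$; summing over $i$ and reindexing yields the compact identity
\[
\frac{p'(z)}{p(z)}=-\sum_{k\ge 1}p_k\,z^{k-1}.
\]
I would then clear the denominator, $p'(z)=-p(z)\sum_{k\ge 1}p_kz^{k-1}$, substitute $p(z)=\sum_{i=0}^{d}a_iz^i$ and $p'(z)=\sum_{k\ge 1}ka_kz^{k-1}$, and compare the coefficient of $z^{k-1}$ on the two sides. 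On the left this coefficient is $ka_k$; expanding the Cauchy product on the right it is $-\sum_{i=0}^{k-1}a_ip_{k-i}$, which is exactly the asserted identity. For $k>d$ both sides are $0$, consistent with the convention $a_k=0$.

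A shorter alternative is to invoke the \emph{classical} Newton identities for the reciprocal polynomial $\hat p(z):=z^dp(1/z)=a_0z^d+a_1z^{d-1}+\cdots+a_d$, whose roots are the $\zeta_i^{-1}$ and whose $k$-th power sum of roots is therefore $p_k$: the standard form $kc_k=-\sum_{i=0}^{k-1}c_iP_{k-i}$ (for a polynomial $\sum_i c_iz^{d-i}$ with root power sums $P_k$) specializes with $c_i=a_i$ to the statement. In either route there is really no obstacle to overcome; the only points that repay a moment's attention are the hypothesis $a_0\neq 0$ (without which $p_j$ is undefined), the legitimacy of the power-series manipulations (which again rests on $p(0)\neq 0$), and keeping the indices straight in the Cauchy product or, respectively, in passing to the reciprocal polynomial.
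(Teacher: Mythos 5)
The paper does not prove this proposition in the text at hand; it simply cites \cite{PR16} for the argument. Your proof is correct, self-contained, and standard: the logarithmic-derivative route gives $p'(z)/p(z)=-\sum_{k\ge 1}p_k z^{k-1}$ after expanding each $(z-\zeta_i)^{-1}$ as a geometric series, and comparing coefficients of $z^{k-1}$ in $p'(z)=-p(z)\sum_{k\ge 1}p_k z^{k-1}$ yields exactly $ka_k=-\sum_{i=0}^{k-1}a_ip_{k-i}$. Your alternative via the reciprocal polynomial $\hat p(z)=z^dp(1/z)$ and the classical Newton identities is the same identity seen from the symmetric-function side, and it makes transparent why the proposition deserves the label ``a variant of the Newton identities.'' You are also right to flag that $a_0=p(0)\neq 0$ is a tacit hypothesis (otherwise the $p_j$ are undefined); in the paper's application this holds automatically since the BIGCP normalization forces $e_0(G)=1$. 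One small cosmetic point: the bound $|z|<\min_i|\zeta_i|$ is unnecessary once you note, as you do, that everything can be carried out formally in $\C[[z]]$ because $p$ is a unit there; stating only the formal version would be slightly cleaner.
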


\subsection{Proof of Theorem~\ref{thm:compute coef}}
We follow the proof as given in \cite{PR16}, which we modify slightly at certain points.

Recall that $p(\cdot)$ is a bounded induced graph counting polynomial (BIGCP). Given an $n$-vertex graph $G$ with maximum degree at most $\Delta$, we must show how to compute the first $m$ coefficients of $p$. We will use $\tilde{O}$-notation throughout to mean that we suppress polynomial factors in $m$.
To reduce notation, let us write $p=p(G)$, $d=d(G)$ for the degree of $p$, and $e_i=e_i(G)$ for $i=0,\ldots,d$ for the  coefficients of $p$ (from (\ref{eq:mult graph pol})).  
We also write $p_k:= \zeta_1^{-k} + \cdots + \zeta_d^{-k}$, where $\zeta_1,\ldots,\zeta_d\in \C$ are the roots of the polynomial $p(G)$.

Noting $e_0=1$, Proposition~\ref{pr:newton} gives
\begin{equation}\label{eq:newton}
p_k= -ke_k - \sum_{i=1}^{k-1}e_ip_{k-i},
\end{equation}
for each $k=1, \ldots, d$. 

By \eqref{eq:ind coef}, for $i\geq 1$, the $e_i$ can be expressed as linear combinations of induced subgraph counts of graphs with at most $\alpha i$ vertices. Since $p_1=-e_1$, this implies that the same holds for $p_1$.
By induction, \eqref{eq:product of ind}, and \eqref{eq:newton} we have that for each $k$
\begin{equation}\label{eq:power to ind}
p_k=\sum_{H\in \G_{\alpha k}}a_{H,k}\ind(H,G),
\end{equation}
for certain, yet unknown, coefficients $a_{H,k}$.

Since $p$ is multiplicative, the inverse power sums are additive. Thus Lemma \ref{lem:additivity} implies that $a_{H,k}=0$ if $H$ is not connected. 
Denote by $\mathcal{C}_{i}(G)$ the set of connected graphs of order at most $i$ that occur as induced subgraphs in $G$.
Let us assume that $G$ has vertex set $[n]$.
Denote by $\mathcal{T}_{\leq \alpha k}(G)$ the list consisting of those sets $S\subseteq[n]$ of size at most $\alpha k$ that induce a connected graph in $G$.
This way we can rewrite \eqref{eq:power to ind} as follows:
\begin{equation}\label{eq:power support}
p_k=\sum_{H\in \mathcal{C}_{\alpha k}(G)}a_{H,k}\ind(H,G)=\sum_{S\in \mathcal{T}_{\leq \alpha k}(G)}a_{G[S],k}.
\end{equation}
The next lemma says that we can compute the coefficients $a_{S,k}:=a_{G[S],k}$ efficiently for $k=1,\ldots,m$.

\begin{lemma}\label{lemma:compute}
There is an $\tilde{O}(n (e\Delta)^{\alpha m}\beta_m4^{\alpha m})$-time algorithm, which given a BIGCP $p$ (with parameters $\alpha$ and $\beta$) and an $n$-vertex graph $G$ of maximum degree $\Delta$, computes and lists the coefficients $a_{S,k}$ in \eqref{eq:power support} for all $S\in \mathcal{T}_{\leq \alpha k}(G)$ and all $k=1,\ldots, m$.
\end{lemma}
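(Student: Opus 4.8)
The plan is to compute the coefficients $a_{S,k}$ by induction on $k$, following the recursion coming from Newton's identities \eqref{eq:newton} together with \eqref{eq:ind coef} and the product rule \eqref{eq:product of ind}, and at each stage to keep a compact representation of $p_k$ and of $e_k$ as lists indexed by the connected induced subgraphs of $G$ of bounded size. Concretely, I would maintain, for each $k=1,\dots,m$, a list $\A_k$ of pairs $(S,a_{S,k})$ ranging over $S\in\mathcal T_{\leq\alpha k}(G)$ (representing $p_k$ via the right-hand side of \eqref{eq:power support}), and similarly a list $\mathcal E_k$ of pairs $(S,\text{coefficient})$ representing $e_k$; note $e_k$ need \emph{not} be additive, so its list runs over all $S$ (connected or not) with $|S|\le\alpha k$, but we only ever need it as an intermediate object inside the recursion. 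First I would precompute, using Lemma~\ref{lem:enumerate} with parameter $\alpha m$, the list $\mathcal T_{\leq\alpha m}(G)$; by Lemma~\ref{lem:graph count} this has size $\tilde O(n(e\Delta)^{\alpha m})$ and is produced within the claimed time budget.

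The base case is $k=1$: by \eqref{eq:ind coef}, $e_1(G)=\sum_{H\in\G_{\alpha}}\lambda_{H,1}\ind(H,G)$, and $p_1=-e_1$, so $a_{S,1}=-\lambda_{G[S],1}$ for each $S\in\mathcal T_{\le\alpha}(G)$; here each $\lambda_{G[S],1}$ costs $\beta_1\le\beta_m$ to compute by condition (ii) of BIGCP, and identifying which isomorphism type $G[S]$ is costs $\mathrm{poly}(m)$ (in fact one stores a canonical form once per type). For the inductive step, suppose $\A_1,\dots,\A_{k-1}$ and the $e_i$-lists for $i<k$ are known. I would form $e_k$ from \eqref{eq:ind coef}: enumerate over all isomorphism types $H\in\G_{\alpha k}$ — but rather than iterating over abstract $H$, which is too expensive, iterate over $S\in\mathcal T_{\le \alpha k}(G)$ \emph{and} over all ways of completing a (not necessarily connected) induced subgraph; the cleaner route, and the one I would take, is to build $e_k$ as a list over $S\in[n]$ with $|S|\le \alpha k$ that arise as unions of connected pieces, but since only additive invariants are supported on connected $S$ and $e_k$ is supported on sets that are \emph{disjoint unions of connected sets of total size $\le\alpha k$}, one can enumerate those by taking unordered tuples from $\mathcal T_{\le\alpha k}(G)$ whose parts are pairwise nonadjacent. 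Then apply \eqref{eq:newton}: $p_k=-ke_k-\sum_{i=1}^{k-1}e_i p_{k-i}$. Each product $e_i p_{k-i}$ is computed via \eqref{eq:product of ind}: for a pair $(S_1,S_2)$ with $S_1$ in the $e_i$-list and $S_2\in\A_{k-i}$, their contribution lands on every $S=S_1\cup S_2'$ where $S_2'$ is a copy of $G[S_2]$ overlapping $S_1$ suitably; combinatorially one pushes the product down to sets $S$ with $|S|\le\alpha i+\alpha(k-i)=\alpha k$, with the structure constants $c^{G[S]}_{G[S_1],G[S_2]}$ being counts of pairs of subsets of $V(G[S])$ — these are computed once per triple of isomorphism types in time $\mathrm{poly}(m)\cdot 4^{\alpha m}$ (the $4^{\alpha m}$ accounting for iterating over pairs of subsets of a $\le\alpha m$-element vertex set). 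Finally, additivity of $p_k$ (Lemma~\ref{lem:additivity}) guarantees that after collecting terms all the coefficients on \emph{disconnected} $S$ vanish, so we may safely restrict the resulting list back to $S\in\mathcal T_{\le\alpha k}(G)$, giving $\A_k$.

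For the running time: there are $m$ rounds; in round $k$ the dominant cost is the convolution step, which pairs entries of lists of size $\tilde O(n(e\Delta)^{\alpha m})$ with entries of lists whose size (number of isomorphism types, or bounded-size disconnected unions anchored near a given vertex) is $\tilde O((e\Delta)^{\alpha m})$, times the per-pair cost $\tilde O(\beta_m 4^{\alpha m})$ of looking up the relevant $\lambda$'s and structure constants. A subtlety I would be careful about is that naively the convolution is a product of two $n(e\Delta)^{\alpha m}$-sized lists, which is too big; the fix is the usual bounded-degree locality argument — in $\ind(H_1,\cdot)\ind(H_2,\cdot)=\sum_H c^H_{H_1,H_2}\ind(H,\cdot)$ the only $H$ with $c^H_{H_1,H_2}\neq0$ have $|H|\le|H_1|+|H_2|$ and, when realized in $G$, the copy of $H_2$ must intersect or touch the fixed copy of $H_1$, so for each of the $\tilde O(n(e\Delta)^{\alpha m})$ sets $S_1$ there are only $\tilde O((e\Delta)^{\alpha m})$ relevant $S_2$, and we charge the product to the $\le \alpha m$-subsets of the $\le 2\alpha m$-element ground set $S_1\cup S_2$. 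The main obstacle, and the place where the modification relative to \cite{PR16} is needed, is precisely this bookkeeping: ensuring that all objects manipulated stay supported on sets of size $\le\alpha m$ (so that the relevant $\lambda_{H,i}$ are among those the BIGCP definition lets us compute, in time $\beta_m$), and that the enumeration of the $e_k$-side list of disconnected unions is carried out without ever materializing a list of size larger than $\tilde O(n(e\Delta)^{\alpha m})$ or doing more than $\tilde O(4^{\alpha m})$ work per isomorphism-type triple. Once the locality and size bounds are nailed down, multiplying the per-round cost by $m$ and absorbing $\mathrm{poly}(m)$ factors into $\tilde O$ gives the stated bound $\tilde O(n(e\Delta)^{\alpha m}\beta_m 4^{\alpha m})$.
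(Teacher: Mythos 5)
Your high-level framework is right — Newton's identities, additivity of $p_k$, enumeration of $\mathcal T_{\le\alpha m}(G)$ via Lemma~\ref{lem:enumerate}, and the final restriction to connected $S$ — and you have the technical ingredients floating around (the $4^{\alpha m}$ from pairs of subsets, the structure constants $c^{G[S]}_{\cdot,\cdot}$). But the core of your plan, namely maintaining a separate list $\mathcal E_k$ representing $e_k$ and computing $e_i\,p_{k-i}$ as a convolution over pairs of entries from two lists, has a gap that you yourself flag (``Once the locality and size bounds are nailed down\dots'') and do not close. The underlying set of sets on which $e_k$ is supported — disjoint unions of connected pieces of total size $\le\alpha k$ — can have size on the order of $n^{\Omega(k)}$, and your sketch of a locality fix (``the copy of $H_2$ must intersect or touch the fixed copy of $H_1$'') is false in general, since $c^{H}_{H_1,H_2}\neq0$ when $H=H_1\cup H_2$ is a disjoint union, in which case the two copies need not be anywhere near each other in $G$. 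Restricting to connected $S$ only at the very end does not by itself justify pruning the intermediate lists.

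The paper sidesteps all of this with one observation you are missing. Since $p_k$ is additive, we only ever need the coefficient $a_{G[S],k}$ for connected $S\in\mathcal T_{\le\alpha k}(G)$, and for such a fixed $S$ (with $H:=G[S]$) the coefficient of $\ind(H,\cdot)$ in $p_{k-i}e_i$ rewrites as a \emph{purely local} sum
\[
\sum_{H_1,H_2} c^H_{H_1,H_2}\,a_{H_2,(k-i)}\,\lambda_{H_1,i}
\;=\;
\sum_{\substack{(U,T)\,:\ U\cup T = S}} a_{G[T],\,(k-i)}\,\lambda_{G[U],\,i},
\]
over the at most $4^{|S|}\le4^{\alpha k}$ pairs $(U,T)$ of subsets of $S$ covering $S$. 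For each such pair, $\lambda_{G[U],i}$ is computed in time $\beta_i$ directly from BIGCP condition (ii), and $a_{G[T],k-i}$ is looked up from previous rounds (and is $0$ if $G[T]$ is disconnected). Thus no $e_i$-list is ever materialized, and no convolution over pairs of subsets of $V(G)$ is performed: each $a_{S,k}$ costs $\tilde O(\beta_m 4^{\alpha m})$, and multiplying by $|\mathcal T_{\le\alpha m}(G)|=\tilde O(n(e\Delta)^{\alpha m})$ gives the bound. This locality of the pair $(U,T)$ to the single set $S$ is the key simplification that your proposal lacks, and it is exactly what makes the size control you are worried about a non-issue.
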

\begin{proof}
We assume that the vertex set of $G$ is equal to $[n]$.
Using the algorithm of Lemma \ref{lem:enumerate}, we first compute the list $\mathcal{T}_{\leq \alpha k}$ consisting of all subsets $S$ of $V(G)$ such that $|S|\leq \alpha k$ and $G[S]$ is connected. 
This takes time bounded by 
\begin{equation}\label{eq:list}
O(n(\alpha m)^3 (e\Delta)^{\alpha m}) = \tilde{O}(n(e\Delta)^{\alpha m}).
\end{equation}
(Note that the algorithm in Lemma \ref{lem:enumerate} actually computes $\mathcal{T}_{\leq \alpha k}$ when it computes $\mathcal{T}_{\alpha m}$.)

To prove the lemma, let us fix $k\leq m$ and show how to compute the coefficients $a_{S,k}$, assuming that  we have already computed and listed the coefficients $a_{S',k'}$ for all $k'<k$ and $S'\in \mathcal{T}_{\leq \alpha k'}$.
Let us fix $S\in \mathcal{T}_{\leq \alpha k}$.
Let $H=G[S]$. 
By (\ref{eq:newton}), it suffices to compute the coefficient of $\ind(H,\cdot)$ in $p_{k-i}e_{i}$ for $i=1,\ldots,k$ (where we set $p_0=1)$.
By \eqref{eq:ind coef}, \eqref{eq:product of ind} and \eqref{eq:power to ind} we know that the coefficient of $\ind(H,\cdot)$ in $p_{k-i}e_{i}$ is given by
\begin{equation}\label{eq:coef e_{k-i}p_i}
\sum_{H_1,H_2} c^H_{H_1,H_2}a_{H_2,(k-i)}\lambda_{H_1,i}=\sum_{(U,T) : U \cup T = V(H)} a_{H[T], (k-i)}\lambda_{H[U], i} .
\end{equation}
As $|V(H)|\leq \alpha k$, the second sum in \eqref{eq:coef e_{k-i}p_i} is over at most $4^{\alpha k}=O(4^{\alpha m})$ pairs $(U,T)$.
For each such pair, we need  to compute $\lambda_{H[U], i}$ and $ a_{H[T],(k-i)}$.
We can compute $\lambda_{H[U],i}$ in time bounded by $O(\beta_i)=O(\beta_m)$ since $p$ is a BIGCP.
As $H[T]=G[T]$, to compute $a_{H[T],(k-i)}$ we just need to look up the coefficient $a_{T,k-i}$,  which takes time $O(k-i)$.

Together, all this implies that the coefficient of $\ind(H,\cdot)$ in $p_{k-i}e_{i}$ can be computed in time bounded by 
\begin{equation}\label{eq:compute ahk}
O( 4^{\alpha m}( \beta_m+ m))=\tilde{O}(\beta_m\cdot 4^{\alpha m}). 
\end{equation}
So the coefficient $a_{H,k}$ can be computed in the same time (since we suppress polynomial factors in $m$).
Thus all coefficients $a_{S,k}$ for $S\in \mathcal{T}_{\leq \alpha k}$ can be computed and listed in time bounded by $|\mathcal{T}_{\leq \alpha k}|$ multiplied by the expression \eqref{eq:compute ahk}, which is bounded by 
\begin{equation}\label{eq:compute ahk2}
\tilde{O}(n (e\Delta)^{\alpha m}\beta_m4^{\alpha m})
\end{equation}
by Lemma~\ref{lem:graph count}.

So the total running time is bounded by the time to compute the list $\mathcal{T}_{\leq \alpha m}$ (which is given by  \eqref{eq:list}) plus the time to compute the $a_{S,k}$ for $S \in\mathcal{T}_{\leq \alpha k}$ (which is given by \eqref{eq:compute ahk2}) for $k=1,\ldots,m$. 
This proves the lemma.
\end{proof}

To finish the proof of the theorem, we compute $p_k$ for each $k=1,\ldots, m$ by adding all the numbers $a_{S,k}$ over all $S\in \mathcal{T}_{\leq \alpha k}(G)$ using \eqref{eq:power support} (these numbers were computed in the previous lemma in time $\tilde{O}(n (e\Delta)^{\alpha m}\beta_m4^{\alpha m})
$). 
Doing this addition takes time 
\[O(m|\mathcal{T}_{\leq \alpha m}(G)|)=\tilde{O}(n(e\Delta)^{\alpha m} ).
\] 
Finally, knowing the $p_i$, we can inductively compute the $e_i$ for $i=1, \ldots, m$ using the relations (\ref{eq:newton}), in quadratic time in $m$. 
So we see that the total  running time for computing $e_1, \ldots, e_m$ is dominated by the computation of the $\alpha_{S,k}$ and is $\tilde{O}(n (e\Delta)^{\alpha m}\beta_m4^{\alpha m})$.
This proves the theorem.

\section{Proof of Theorem~\ref{thm:main}}\label{sec:proof main}
We first set up some notation before we state our key definition.
For a graph $H$ we write $H = i_1H_1 \cup \cdots \cup i_rH_r$ to mean that $H$ is the disjoint union of $i_1$ copies of $H_1$, $i_2$ copies of $H_2$ all the way to $i_r$ copies of $H_r$ for connected and pairwise non-isomorphic graphs $H_1, \ldots, H_r$. 
For vectors $\mu,\nu \in \Z_{\geq 0}^r$ we write $\mu \circ \nu$ for the vector in $\Z_{\geq 0}^r$ that is the pointwise or Hadamard product of $\mu$ and $\nu$. We write $\mu^\nu$ for the product $\mu_1^{\nu_1} \cdots \mu_r^{\nu_r} \in \Z_{\geq 0}$, while $\mu \cdot \nu$ denotes the usual scalar product. 
For a vector of variables $x=(x_1,\ldots,x_r)$ and $\mu\in \Z_{\geq 0}^r$ we define $x^\mu:=x_1^{\mu_1}\cdots x_r^{\mu_r}$.

For pairwise non-isomorphic and connected graphs $H_1,\ldots,H_r$ write $\underline{H} = (H_1, \ldots, H_r)$; we define the multivariate graph polynomial $Z_{\underline{H}}(G) \in \mathbb{Z}[x_1, \ldots, x_r]$ as follows. For a graph $G$ we let 
\begin{align*}
Z_{\underline{H}}(G;x) = \sum_{\gamma \in \Z_{\geq 0}^r}  \ind(\gamma \underline{H}, G) x^{\gamma \circ h},
\end{align*}
where $x= (x_1, \ldots, x_r)$, $\gamma = (\gamma_1, \ldots, \gamma_r)$, $h: = (|V(H_1)|, \ldots, |V(H_r)|)$ and where $\gamma \underline{H}$ denotes the graph $\gamma_1H_1 \cup \cdots \cup \gamma_rH_r$.

Computing $\ind(H,G)$ for any two graphs $H= i_1H_1 \cup \cdots \cup i_rH_r$ and $G$ can now be modelled as computing the  coefficient of the monomial $x_1^{i_1h_1}x_2^{i_2h_2}\cdots x_r^{i_rh_r}$ in $Z_{\underline{H}}(G;x)$.
Let us start by gathering some facts about the polynomial $Z_{\underline{H}}$.
\begin{proposition}
\label{pr:mult}
The polynomial $Z_{\underline{H}}$ is multiplicative, i.e., for any two graphs $G_1$ and $G_2$, 
$Z_{\underline{H}}(G_1 \cup G_2;x) = Z_{\underline{H}}(G_1;x) \cdot Z_{\underline{H}}(G_2;x)$. In particular, any evaluation of $Z_{\underline{H}}$ is also multiplicative.
\end{proposition}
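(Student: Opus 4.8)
The plan is to deduce multiplicativity of $Z_{\underline{H}}$ from a single identity on its coefficients, namely that for every $\gamma\in\Z_{\geq 0}^r$ one has
\[
\ind(\gamma\underline{H},\, G_1\cup G_2)=\sum_{\alpha+\beta=\gamma}\ind(\alpha\underline{H},G_1)\,\ind(\beta\underline{H},G_2),
\]
the sum being over all pairs $\alpha,\beta\in\Z_{\geq 0}^r$ with $\alpha+\beta=\gamma$. Granting this, the proposition is a short generating-function computation: expand $Z_{\underline{H}}(G_1\cup G_2;x)=\sum_{\gamma}\ind(\gamma\underline{H},G_1\cup G_2)x^{\gamma\circ h}$, substitute the identity, and use that $\circ$ distributes over vector addition so that $x^{\gamma\circ h}=x^{\alpha\circ h}\,x^{\beta\circ h}$ whenever $\alpha+\beta=\gamma$; the resulting double sum factors as $Z_{\underline{H}}(G_1;x)\cdot Z_{\underline{H}}(G_2;x)$. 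The base case $Z_{\underline{H}}(\emptyset;x)=1$ is immediate, since the only subset of the empty vertex set is $\emptyset$, inducing the empty graph (the case $\gamma=0$).

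To prove the coefficient identity I would use the unique decomposition of any $S\subseteq V(G_1\cup G_2)$ as $S=S_1\sqcup S_2$ with $S_i\subseteq V(G_i)$, for which $(G_1\cup G_2)[S]=G_1[S_1]\cup G_2[S_2]$. The crucial observation — and the only place the hypotheses are used — is that $H_1,\dots,H_r$ are connected, so every connected component of $\gamma\underline{H}$ is isomorphic to one of them. Hence if $(G_1\cup G_2)[S]\cong\gamma\underline{H}$, each of its connected components is isomorphic to some $H_i$ and lies entirely in $V(G_1)$ or entirely in $V(G_2)$; grouping components by which side they lie on shows $G_1[S_1]\cong\alpha\underline{H}$ and $G_2[S_2]\cong\beta\underline{H}$ for the unique $\alpha,\beta$ with $\alpha+\beta=\gamma$ recording how the components split. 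Conversely, any $S_1$ with $G_1[S_1]\cong\alpha\underline{H}$ and $S_2$ with $G_2[S_2]\cong\beta\underline{H}$ combine to $S=S_1\sqcup S_2$ with $(G_1\cup G_2)[S]\cong(\alpha+\beta)\underline{H}$. This yields a bijection between $\{S: (G_1\cup G_2)[S]\cong\gamma\underline{H}\}$ and $\bigsqcup_{\alpha+\beta=\gamma}\{S_1:G_1[S_1]\cong\alpha\underline{H}\}\times\{S_2:G_2[S_2]\cong\beta\underline{H}\}$, which is precisely the claimed identity.

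For the final assertion, an evaluation of $Z_{\underline{H}}$ at a point $a\in\C^r$ is its image under the ring homomorphism $\C[x_1,\dots,x_r]\to\C$ sending $x_i\mapsto a_i$, and ring homomorphisms preserve products and send $1$ to $1$, so multiplicativity descends to every evaluation. The main obstacle is the combinatorial step establishing that the connected components of an induced copy of $\gamma\underline{H}$ distribute cleanly across the two parts of the disjoint union and that the corresponding multiplicity vectors $\alpha,\beta$ sum to $\gamma$; invoking connectedness of the $H_i$ makes this routine, but it is the one genuinely load-bearing point in the argument.
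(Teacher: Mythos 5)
Your argument is correct and follows essentially the same route as the paper's: both compare the coefficient of $x^{\gamma\circ h}$ on the two sides and reduce multiplicativity to the convolution identity $\ind(\gamma\underline{H},G_1\cup G_2)=\sum_{\alpha+\beta=\gamma}\ind(\alpha\underline{H},G_1)\ind(\beta\underline{H},G_2)$. The only difference is one of exposition: the paper asserts this combinatorial identity in a single sentence, whereas you spell out the bijection on vertex subsets via the unique split $S=S_1\sqcup S_2$ and the fact that connected components of an induced subgraph of a disjoint union lie wholly on one side, and you also record explicitly the base case $Z_{\underline{H}}(\emptyset;x)=1$ and the ring-homomorphism reason that evaluations inherit multiplicativity.
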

\begin{proof}
Note first that every monomial in $Z_{\underline{H}}(G;x)$ is of the form $x^{\gamma \circ h}$ for some unique choice of $\gamma$.  
For notational convenience we write $s_\gamma(G):= \ind(\gamma \underline{H}, G)$.
Consider the coefficient of $x^{\gamma \circ h}$ in the polynomial $Z_{\underline{H}}(G_1;x) \cdot Z_{\underline{H}}(G_2;x)$. The coefficient is given by
\[ \sum_{\mu + \nu = \gamma} s_{\mu}(G_1) s_{\nu}(G_2),
\]
which counts precisely the number of copies of $\gamma_1H_1 \cup \cdots \cup \gamma_rH_r$ in $G_1 \cup G_2$, that is, $s_{\gamma}(G_1 \cup G_2)$, which is the coefficient of $x^{\gamma \circ h}$ in the polynomial $Z_{\underline{H}}(G_1 \cup G_2;x)$.
\end{proof}

Suppose $\mu \in \Z_{\geq 0}^r$ and let $z$ be a variable. Define the graph polynomial $Z_{\mu} = Z_{\mu,\underline{H}}(G) \in \Z[z]$ by 
\[
 Z_{\mu}(z) = Z_{\underline{H}}(G;(\mu_1z, \ldots, \mu_rz)) =: \sum_{i \geq 0} s_i(\mu)z^i;
\]
here the second equality defines the numbers $s_i(\mu) = s_i(\mu)(G)$. 
In particular, we know that
\begin{equation}\label{eq:combined}
s_i(\mu) = \sum_{\substack{\gamma \in \Z_{
\geq 0}^r \\\gamma \cdot h = i}} \mu^{\gamma}\ind(\gamma\underline{H},G).
\end{equation}  

\begin{proposition}
\label{pr:BIGCP}
Fix $\underline{H} = (H_1, \ldots, H_r)$ where the $H_i$ are pairwise non-isomorphic connected graphs each of maximum degree at most $\Delta$ and fix $\mu \in \Z_{\geq 0}^r$. 
Then  $Z_{\mu,\underline{H}}(G;z)$ is a BIGCP with parameters $\alpha = 1$ and $\beta_i = i^2r \Delta^{i-1}$. 
\end{proposition}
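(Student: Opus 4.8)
The plan is to verify the three conditions of Definition~\ref{df:bigcp} one by one, taking $\alpha=1$ and $\beta_i=i^2r\Delta^{i-1}$. Multiplicativity comes for free: $Z_{\mu,\underline H}(G;z)$ is by definition the evaluation $Z_{\underline H}(G;(\mu_1z,\dots,\mu_rz))$, which is multiplicative by Proposition~\ref{pr:mult}. To put it into the shape \eqref{eq:mult graph pol} with $e_0=1$, I would read off from \eqref{eq:combined} that the constant term $s_0(\mu)(G)$ receives only the contribution of the index $\gamma$ with $\gamma\cdot h=0$; since each $H_j$ is connected, $h_j=|V(H_j)|\ge1$, so $\gamma=\mathbf{0}$ and $s_0(\mu)(G)=\mu^{\mathbf{0}}\ind(\emptyset,G)=1$. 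Hence $e_i(G)=s_i(\mu)(G)$ for all $i$.

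For condition (i), \eqref{eq:combined} already writes $e_i(G)=s_i(\mu)(G)=\sum_{\gamma:\gamma\cdot h=i}\mu^\gamma\ind(\gamma\underline H,G)$ as a linear combination of induced subgraph counts, and every graph $\gamma\underline H$ occurring here has exactly $\gamma\cdot h=i$ vertices, hence lies in $\G_i=\G_{\alpha i}$ with $\alpha=1$. The only thing to check is that after grouping isomorphic terms the coefficients $\lambda_{H,i}$ are well defined, i.e. that distinct vectors $\gamma\ne\gamma'$ with $\gamma\cdot h=\gamma'\cdot h=i$ never give isomorphic graphs $\gamma\underline H\cong\gamma'\underline H$. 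This holds because the $H_j$ are connected and pairwise non-isomorphic, so from a graph one can read off, for each $j$, the number of its connected components isomorphic to $H_j$, and thereby recover $\gamma$. Thus one sets $\lambda_{H,i}:=\mu^\gamma$ if $H\cong\gamma\underline H$ for the (unique) admissible $\gamma$, and $\lambda_{H,i}:=0$ otherwise.

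The main work is condition (ii): computing $\lambda_{H,i}$ for a given $H$ with $|V(H)|\le i$ in time $i^2r\Delta^{i-1}$. I would use the following algorithm. If $|V(H)|\ne i$, or if $H$ has a vertex of degree greater than $\Delta$, output $0$; this is correct since then $H$ is isomorphic to no $\gamma\underline H$. Otherwise compute the connected components $C_1,\dots,C_t$ of $H$; then $t\le i$ and each $C_\ell$ has maximum degree at most $\Delta$. For each $\ell$ and each $j\in[r]$, decide whether $C_\ell\cong H_j$: trivially if their orders differ, and otherwise by running the algorithm of Lemma~\ref{lem:connected count}(i) with host graph $C_\ell$ (of order at most $i$ and maximum degree at most $\Delta$) and pattern $H_j$, using that when $|V(C_\ell)|=|V(H_j)|$ one has $\ind(H_j,C_\ell)\ne0$ exactly when $C_\ell\cong H_j$. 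If some $C_\ell$ is isomorphic to no $H_j$, output $0$; otherwise let $\gamma_j$ be the number of $\ell$ with $C_\ell\cong H_j$ and output $\prod_{j=1}^r\mu_j^{\gamma_j}$. For the running time: the degree check and component decomposition are negligible, there are at most $i$ components, each is tested against at most $r$ patterns, and each test costs $O(|V(C_\ell)|\,\Delta^{|V(C_\ell)|-1})=O(i\,\Delta^{i-1})$ by Lemma~\ref{lem:connected count}(i), while forming $\prod_j\mu_j^{\gamma_j}$ needs $O(ir)$ multiplications; the dominant term is $O(i\cdot r\cdot i\,\Delta^{i-1})=O(i^2r\Delta^{i-1})$, and $\beta_i=i^2r\Delta^{i-1}$ is non-decreasing. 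This completes the verification.

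I do not expect any real obstacle; the statement is essentially bookkeeping around \eqref{eq:combined}. The one genuine ingredient is that isomorphism testing between bounded-degree connected graphs of equal order is cheap, which is precisely Lemma~\ref{lem:connected count}(i), and the one point that needs a little care is discarding graphs $H$ of maximum degree greater than $\Delta$ at the outset — both so that such $H$ correctly receive coefficient $0$, and so that the bounded-degree hypothesis of Lemma~\ref{lem:connected count} applies to the components of the graphs that survive.
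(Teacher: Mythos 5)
Your proposal is correct and follows essentially the same route as the paper: observe that multiplicativity is inherited from $Z_{\underline H}$ via Proposition~\ref{pr:mult}, read off $\alpha=1$ from the fact that $\gamma\underline H$ has exactly $\gamma\cdot h=i$ vertices in \eqref{eq:combined}, and compute $\lambda_{F,i}$ by decomposing $F$ into components and testing each against $H_1,\dots,H_r$ via Lemma~\ref{lem:connected count}(i), yielding the bound $O(i^2 r\Delta^{i-1})$. The extra care you take (checking $e_0=1$, the well-definedness of the $\lambda_{H,i}$ after grouping isomorphic $\gamma\underline H$, and the preliminary degree check so that Lemma~\ref{lem:connected count} legitimately applies) are details the paper leaves implicit but do not change the approach.
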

\begin{proof}
Since $Z_{\mu,\underline{H}}(G)$ is a particular evaluation of $Z_{\underline{H}}(G)$, we know by Proposition~\ref{pr:mult} that it is multiplicative.

The coefficient of $z^i$ in $Z_{\mu,\underline{H}}(G;z)$ is given by \eqref{eq:combined}.
Since $\gamma \underline{H}$ is a graph with exactly $\gamma \cdot h = i$ vertices, we can take $\alpha$ to be $1$ in the definition of BIGCP. 

For a given graph $F$, we must determine $\lambda_{F,i}$ in the definition of BIGCP and the time $\beta_i$ required to do this. Note that we may assume $|V(F)| = i$; otherwise $\lambda_{F,i} = 0$. If $|V(F)| = i$, we must test if $F$ is isomorphic to a graph of the form $\gamma \underline{H}$ with $\gamma \cdot h = i$ and if so we must output the value of $\lambda_{F,i}$ as $\mu^{\gamma}$ (this last step taking $i$ arithmetic operations).
To test if $F$ is isomorphic to a graph of the form $\gamma \underline{H}$, we test isomorphism of each component of $F$ against each of the graphs $H_1, \ldots, H_r$, which takes time at most $O(ir\Delta^{i-1})$ using Lemma~\ref{lem:connected count} at most $ir$ times. Thus the total time to compute $\lambda_{F,i}$ is at most $O(i^2r\Delta^{i-1})$.
\end{proof}

Now since $Z_{\mu,\underline{H}}(G;z)$ is a BIGCP, Theorem~\ref{thm:compute coef} allows us to compute the coefficients $s_i(\mu)$ in \eqref{eq:combined} with the desired running time. 
However the $s_i(\mu)$ are linear combinations of the numbers $\ind(\gamma \underline{H}, G)$, while we wish to compute one of these numbers in particular, say $\ind(\rho \underline{H}, G)$. By making careful choices of different $\mu$, we will obtain an invertible linear system whose solution will include the number $\ind(\rho \underline{H},G)$.
We will require Alon's Combinatorial Nullstellensatz~\cite{A99}, which we state here for the reader's convenience.
\begin{theorem}[\cite{A99}]
\label{thm:null}
Let $f(x_1, \ldots, x_n)$ be a polynomial of degree $d$ over a field $\mathbb{F}$. Suppose the coefficient of the monomial $x_1^{\mu_1}\cdots x_n^{\mu_n}$ in $f$ is nonzero and $\mu_1+ \cdots + \mu_n = d$. If $S_1, \ldots, S_n$ are finite subsets of $\mathbb{F}$ with $|S_i| \geq \mu_i + 1$ then there exists a point $x \in S_1 \times \cdots \times S_n$ for which $f(x) \not=0$.
\end{theorem}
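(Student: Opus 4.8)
The plan is to prove this by contradiction along the standard lines for the Combinatorial Nullstellensatz: assume $f(x)=0$ for every $x \in S_1 \times \cdots \times S_n$ and contradict the hypothesis that the coefficient of $x_1^{\mu_1}\cdots x_n^{\mu_n}$ is nonzero. First I would fix, for each $i$, a subset of $S_i$ of size exactly $\mu_i+1$ — $f$ still vanishes on the resulting sub-grid — and rename it $S_i$. For each $i$ introduce the monic univariate polynomial $g_i(x_i) := \prod_{s \in S_i}(x_i - s) = x_i^{\mu_i+1} - \sum_{j=0}^{\mu_i} c_{ij}x_i^{j}$ with $c_{ij}\in\mathbb{F}$, which vanishes on $S_i$. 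Thus whenever we evaluate at a point of the grid we may legitimately substitute $x_i^{\mu_i+1} = \sum_{j=0}^{\mu_i} c_{ij}x_i^{j}$.

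The core step is a reduction. Dividing $f$ successively by $g_1, \ldots, g_n$ (equivalently, repeatedly performing the substitution above to kill every power $x_i^{e}$ with $e \geq \mu_i+1$) produces a polynomial $\bar f$ with $\deg_{x_i}\bar f \leq \mu_i$ for all $i$; since the $c_{ij}$ are field constants, reducing modulo $g_i$ only ever affects powers of $x_i$, so later reductions do not disturb earlier ones. Moreover $\bar f$ agrees with $f$ at every point of $S_1 \times \cdots \times S_n$, so $\bar f$ also vanishes there. The point I would highlight is that each individual substitution takes a monomial $x_i^{\mu_i+1}M$ of total degree $\leq d$ and replaces it by a combination of monomials $x_i^{j}M$ with $j \leq \mu_i$, each of strictly smaller total degree; since the distinguished monomial $x_1^{\mu_1}\cdots x_n^{\mu_n}$ has total degree exactly $d = \mu_1 + \cdots + \mu_n$, it is never produced by a substitution and its coefficient is never touched. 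Hence the coefficient of $x_1^{\mu_1}\cdots x_n^{\mu_n}$ in $\bar f$ is still nonzero, and in particular $\bar f$ is not the zero polynomial.

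To finish I would invoke the elementary grid lemma: a nonzero polynomial $g$ with $\deg_{x_i}g \leq \mu_i$ for all $i$ cannot vanish on all of $S_1 \times \cdots \times S_n$ when $|S_i| = \mu_i+1$. This is proved by induction on $n$ — the case $n=1$ being that a nonzero polynomial of degree at most $\mu_1$ has at most $\mu_1$ roots, and the inductive step expanding $g = \sum_{j=0}^{\mu_n}g_j(x_1,\ldots,x_{n-1})x_n^{j}$, choosing by induction a point $(a_1,\ldots,a_{n-1})$ where some $g_j$ is nonzero, and noting $g(a_1,\ldots,a_{n-1},x_n)$ is then a nonzero univariate polynomial of degree at most $\mu_n$, hence nonzero at some $a_n\in S_n$. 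Applying this lemma to $\bar f$ contradicts the previous paragraph, which completes the proof.

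I expect the only subtle point to be the bookkeeping in the reduction step, where one has to verify simultaneously that $\bar f$ vanishes on the grid, has each variable-degree bounded by $\mu_i$, and retains the nonzero coefficient of $x_1^{\mu_1}\cdots x_n^{\mu_n}$; it is precisely the last of these where the hypothesis $\mu_1 + \cdots + \mu_n = d$ is used, via the observation that every substitution strictly lowers total degree. Everything else — termination of the reduction, and the grid lemma — is routine.
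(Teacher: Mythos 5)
The paper does not prove this theorem: it cites it as Alon's Combinatorial Nullstellensatz~\cite{A99} and uses it as a black box in the proof of Lemma~\ref{le:invertsys}. There is therefore no in-paper proof to compare against. That said, your argument is a correct and essentially standard proof of the statement, and it is worth a few words of comparison with Alon's original route. Alon first proves an ideal-membership statement (if $f$ vanishes on $S_1\times\cdots\times S_n$ then $f=\sum_i h_i g_i$ with $\deg(h_ig_i)\le\deg f$) and deduces the nonvanishing form from it. You instead carry out the reduction directly: replace $f$ by its remainder $\bar f$ after division by the $g_i$, observe that each substitution strictly decreases total degree so the top-degree monomial $x_1^{\mu_1}\cdots x_n^{\mu_n}$ is untouched, and finish with the elementary grid lemma for polynomials with $\deg_{x_i}\le\mu_i$. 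This bypasses the ideal-membership formulation entirely and is, if anything, more self-contained. The one place where care is needed is exactly the one you flag: the preservation of the coefficient of $x_1^{\mu_1}\cdots x_n^{\mu_n}$ relies on $\mu_1+\cdots+\mu_n=\deg f$ so that no substitution can ever \emph{create} that monomial (any preimage would have degree $>d$), and it cannot \emph{destroy} it since the substitution only acts on monomials with $\deg_{x_i}\ge\mu_i+1$. You handle this correctly, and the rest (commutativity of the reductions, and the induction in the grid lemma) is sound. No gaps.
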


Given a vector $h \in\N^r$, let us write $\mathcal{P}_{m,r,h}$ for the set of vectors $\gamma \in \mathbb{Z}_{\geq 0}^r$ such that $\gamma \cdot h = m$.
We note that, as the the number of elements in $\mathcal{P}_{m,r,h}$ is at most the number of monomials in $r$ variables of degree $m$, we have 
\begin{equation}\label{eq:bound P}
|\mathcal{P}_{m,r,h}|\leq \binom{m+r-1}{r-1}.
\end{equation}

\begin{lemma}
\label{le:invertsys}
Fix  $m,r \in \mathbb{N}$ and $h \in \mathbb{N}^r$, and 
let $\gamma_1, \ldots, \gamma_k$ be an enumeration of the 
elements in $\mathcal{P}_{m,r,h}$. Given a 
vector $\nu \in \mathbb{N}^r$, let us write $\nu^* \in \mathbb{N}^k$ for the vector $(\nu^{\gamma_i \circ h})_{i=1}^n\in \N^k$. 
In time $O(k^5 + k^2m e^{m})$, we can find vectors $\nu_1, \ldots, \nu_k\in \mathbb{N}^r$ such that $\nu_1^*, \ldots, \nu_k^*$ are linearly independent. 
\end{lemma}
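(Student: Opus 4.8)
The plan is to construct $\nu_1,\dots,\nu_k$ one at a time, greedily, maintaining the invariant that $\nu_1^*,\dots,\nu_t^*$ are linearly independent in $\mathbb{Q}^k$. The only structural fact needed is that the monomials $x^{\gamma_1\circ h},\dots,x^{\gamma_k\circ h}$ in the variables $x=(x_1,\dots,x_r)$ are pairwise distinct: the entries of $h$ are positive, so $\gamma\mapsto\gamma\circ h$ is injective, and distinct exponent vectors give distinct monomials. Hence for every nonzero $c=(c_1,\dots,c_k)\in\mathbb{Q}^k$ the polynomial $P_c(x):=\sum_{j=1}^k c_j\,x^{\gamma_j\circ h}$ is a nonzero polynomial; moreover every monomial occurring in $P_c$ has total degree $\gamma_j\cdot h=m$, so $\deg P_c=m$ and each such monomial attains this total degree. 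Finally, for any $\nu\in\mathbb{N}^r$ one has $\langle c,\nu^*\rangle=P_c(\nu)$, directly from the definition of $\nu^*$.

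Suppose now that $\nu_1,\dots,\nu_{t-1}\in\mathbb{N}^r$ with $\nu_1^*,\dots,\nu_{t-1}^*$ linearly independent have been found (vacuous for $t=1$). Let $W:=\mathrm{span}(\nu_1^*,\dots,\nu_{t-1}^*)\subsetneq\mathbb{Q}^k$ and choose a nonzero $c\in\mathbb{Q}^k$ orthogonal to $W$, i.e.\ a nonzero vector in the kernel of the $(t-1)\times k$ matrix whose rows are $\nu_1^*,\dots,\nu_{t-1}^*$; such a $c$ exists since $\dim W=t-1<k$, and it can be produced by Gaussian elimination. Fix an index $j_0$ with $c_{j_0}\neq 0$. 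Because $\langle c,\nu^*\rangle=P_c(\nu)$, it suffices to find $\nu_t\in\mathbb{N}^r$ with $P_c(\nu_t)\neq 0$: then $\nu_t^*\notin W$, and so $\nu_1^*,\dots,\nu_t^*$ are linearly independent. To locate such a point quickly we apply Alon's Combinatorial Nullstellensatz (Theorem~\ref{thm:null}): the coefficient of $x^{\gamma_{j_0}\circ h}$ in $P_c$ is $c_{j_0}\neq0$ and this monomial has total degree $m=\deg P_c$, so with $S_\ell:=\{1,2,\dots,(\gamma_{j_0})_\ell h_\ell+1\}\subseteq\mathbb{N}$ there is a point $\nu_t\in S_1\times\cdots\times S_r$ with $P_c(\nu_t)\neq 0$. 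We simply enumerate this box, evaluating $P_c$ at each point until a good $\nu_t$ turns up, and then move to round $t+1$; after $k$ rounds we have the desired vectors.

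For the running time, recall from \eqref{eq:bound P} that $k\le\binom{m+r-1}{r-1}$, and note that enumerating $\mathcal{P}_{m,r,h}$ is routine. In round $t$, computing the kernel vector $c$ and maintaining the list of $\nu_i^*$ costs $O(k^4)$ time (Gaussian elimination on integers of $\mathrm{poly}(m)$ bit-length, since all $\nu_i$ produced have entries at most $m+1$), hence $O(k^5)$ over all $k$ rounds. The box $S_1\times\cdots\times S_r$ has size $\prod_{\ell=1}^r\bigl((\gamma_{j_0})_\ell h_\ell+1\bigr)\le\exp\!\bigl(\sum_{\ell=1}^r(\gamma_{j_0})_\ell h_\ell\bigr)=e^m$, using $1+x\le e^x$ and $\gamma_{j_0}\cdot h=m$; the factor is $1$ on every coordinate outside the support of $\gamma_{j_0}$, so only the at most $m$ coordinates of that support vary. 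Evaluating $P_c(\nu)=\langle c,\nu^*\rangle$ at one box point costs $O(km)$ (each of the $k$ monomials $\nu^{\gamma_j\circ h}$ is a product of $\gamma_j\cdot h=m$ factors), so each round's search costs $O(kme^m)$ and the searches cost $O(k^2me^m)$ in total. Adding up gives the claimed $O(k^5+k^2me^m)$.

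The substance here is not the greedy linear-algebra step, which is classical, but keeping each individual search polynomially small. Bounding $P_c$ by Schwartz--Zippel would force a search over a box of size $(m+1)^r$, while regarding $\det\bigl(\nu_i^{\gamma_j\circ h}\bigr)_{i,j}$ as one nonzero polynomial in all $kr$ coordinates (its $k!$ monomials are distinct, so it is nonzero) and applying Combinatorial Nullstellensatz once yields a box of size $e^{km}$. The point that makes the bound work is to introduce the variables $\nu_1,\dots,\nu_k$ one at a time, so that at each round Combinatorial Nullstellensatz is applied to a polynomial of degree only $m$ whose relevant monomial is supported on a single $\gamma_{j_0}$; this replaces one search of size $e^{km}$ by $k$ searches of size $e^m$.
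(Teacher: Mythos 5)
Your proof is correct and runs on the same engine as the paper's---a greedy, one-vector-at-a-time construction in which Combinatorial Nullstellensatz is applied each round to a single homogeneous polynomial of degree $m$, yielding a search box of size at most $e^m$---but the linear-algebra bookkeeping is genuinely different. The paper maintains the invariant that the $\ell\times\ell$ leading minor $M_\ell=(\nu_i^*|_\ell)$ is nonsingular; it treats $\det(M_\ell)$ as a polynomial in the unknown coordinates of $\nu_\ell$ and reads off, via cofactor expansion along the $\ell$th column, that the coefficient of $x^{\gamma_\ell\circ h}$ is $\det(M_{\ell-1})\neq 0$, so the monomial to which Nullstellensatz is applied is forced to be the $\ell$th one. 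You instead work with the full vectors $\nu_i^*\in\mathbb{Q}^k$, pick a nonzero kernel vector $c$ of the $(t-1)\times k$ matrix they form, and apply Nullstellensatz to the much simpler linear-in-$c$ polynomial $P_c(x)=\sum_j c_j x^{\gamma_j\circ h}$, choosing as witness monomial any $x^{\gamma_{j_0}\circ h}$ with $c_{j_0}\neq 0$. This dualization avoids the coordinate-truncation and the determinant-as-polynomial viewpoint entirely, makes the nonvanishing coefficient immediate ($c_{j_0}$ rather than a cofactor), and gives you freedom in choosing $j_0$; on the other hand the paper's version hands you the coefficient's nonvanishing for free from the inductive hypothesis $\det(M_{\ell-1})\neq 0$, without a separate kernel computation. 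The two are essentially dual (a cofactor vector of $M_\ell$ restricted to the first $\ell-1$ rows is a kernel vector), the running-time analysis is identical, and your $\prod_\ell(1+(\gamma_{j_0})_\ell h_\ell)\le e^{\gamma_{j_0}\cdot h}=e^m$ bound via $1+x\le e^x$ is a marginally cleaner derivation than the AM--GM route the paper takes. Your closing remark correctly identifies why the round-by-round application matters (degree $m$ versus degree $km$), which is indeed the crux.
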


\begin{proof}
For any vector $\nu$, let us write $\nu|_j$ to denote the vector consisting of the first $j$ components. Suppose we have found vectors $\nu_1, \ldots, \nu_{\ell-1}\in  \mathbb{N}^r$ such that the $(\ell-1) \times (\ell-1)$ matrix 
\[
M_{\ell-1} := (\nu_1^*|_{\ell-1}, \ldots, \nu_{\ell-1}^*|_{\ell-1})
\]
has non-zero determinant. We will show how to find $\nu_\ell \in \mathbb{N}^r$ such that the corresponding matrix $M_\ell$ has non-zero determinant. 
First consider the components of $\nu_\ell$ to be unknown variables $x_1, \ldots, x_r$ so that $\det(M_\ell)$ becomes a polynomial $P = P(x_1, \ldots, x_r)$ in the variables $x_1, \ldots, x_r$. In fact it is a homogeneous polynomial of degree $m$. 
Writing $x = (x_1, \ldots, x_r)$, we know that the coefficient of $x^{\gamma_\ell \circ h}$ is $\det(M_{\ell-1}) \not= 0$  (consider the determinant expansion of the matrix $M_\ell$ along the $\ell th$ column).
We must now find $\nu_\ell \in \mathbb{N}^r$ such that $\det(M_\ell) = P(\nu_{\ell,1},\ldots,\nu_{\ell,\ell}) \not = 0$, where $v_{\ell, i}$ is the $i$th component of $v_{\ell}$. 

Assume the components of $\gamma_\ell \in \mathbb{N}^r$ are $a_1, \ldots, a_r$. 
Applying Theorem~\ref{thm:null} to the monomial $x^{\gamma_\ell \circ h}$ and taking the sets $S_i = \{1, \ldots, a_ih_i + 1 \}$ for $i=1,\ldots,r$, we know there exists a vector $\nu_\ell \in S := S_1 \times \cdots \times S_r$ such that $P(\nu_{\ell,1},\ldots,\nu_{\ell,\ell}) \not=0$.  
Computing the polynomial $P$ requires time at most $O(k \cdot k^{3})$ (using that computing the determinant of an $n \times n$ matrix takes $O(n^3)$ time) and evaluating it at every point in $S$ requires at most  $O(m \cdot k \cdot |S|)$ operations. We can bound $|S|$ as follows:
\begin{align*}
|S| = (a_1h_1+1) \cdots (a_rh_r + 1) 
\leq \left( \frac{1}{r} \sum_{i=1}^r (a_ih_i+1) \right)^r 
\leq \left( \frac{m + r}{r} \right)^{r}  
= \left( 1 + \frac{m}{r} \right)^r 
\leq e^{m}.
\end{align*}
The first inequality follows from the arithmetic-geometric mean inequality. Iterating the procedure, we can determine $\nu_1, \ldots, \nu_k$ in time $O(k\cdot(k^4 + m  k |S|)) \leq O(k^5 + k^2m e^{m})$.
\end{proof}
\begin{rem}
We suspect there should be a simpler argument than the one we have just given (perhaps one where the vectors $\nu_1, \ldots, \nu_k$ can be explicitly written down rather than having an algorithm to determine them). Note that one can also use a faster randomised algorithm by applying the Schwarz-Zippel Lemma.
\end{rem}

We can now prove Theorem~\ref{thm:main}.

\begin{proof}[Proof of Theorem~\ref{thm:main}]
We may assume that $\Delta(H)\leq \Delta$, for otherwise $\ind(H,G)=0$.
Let $H=\rho_1H_1\cup \rho_2H_2\cup \ldots\cup \rho_rH_r$, where the $H_j$ are components of $H$ such that the $H_j$ are non-isomorphic and occur $\rho_j\geq 1$ times.
We can find the $H_j$ and $\rho_j$ in time $O(m^3\Delta^m)$ using Lemma~\ref{lem:connected count} (i) to test for isomorphism.
Write $\underline{H} = (H_1, \ldots, H_r)$, $h = (h_1, \ldots, h_r)$ with $h_i = |V(H_i)|$, and $\rho = (\rho_1, \ldots, \rho_r)$. Note that $\rho \cdot h = m$. Also recall 
 \[
 \mathcal{P}_{m,r,h} = \{ \gamma \in \mathbb{N}^r: \gamma \cdot h = m \},
 \]
and enumerate the elements of $\mathcal{P}_{m,r,h}$ as $\gamma_1, \ldots, \gamma_k$. 
We may assume that $\rho=\gamma_1$.
Given a vector $\nu \in \mathbb{N}^r$  let us write, as before, $\nu^*$ for the vector $(\nu^{\gamma_i \circ h})_{i=1}^k$. 
By Lemma~\ref{le:invertsys}, we can find vectors $\nu_1, \ldots, \nu_k \in \mathbb{N}^r$ such that the vectors $\nu_1^*, \ldots, \nu_k^* \in \mathbb{N}^k$ are linearly independent. 
We can find these vectors in time  $O(k^5 + k^2m e^{m}) = \tilde{O}(2^{10m})$, noting that by~\eqref{eq:bound P} $k$ is at most $\binom{m+r-1}{r-1} = O(2^{m+r}) = O(2^{2m})$.

Now, for $i= 1, \ldots, k$, consider the univariate polynomials $Z_{\nu_i}(z) = Z_{\nu_i, \underline{H}, G}(z)$. In particular, using Proposition~\ref{pr:BIGCP} and Theorem~\ref{thm:compute coef} we can compute the coefficient $s_m(\nu_i)$ of $z^m$ in $Z_{\nu_i}(z)$ in time $\tilde{O}(n (e\Delta)^{\alpha m}\beta_m4^{\alpha m})
$  with $\alpha = 1$ and $\beta_i = i^2r \Delta^{i-1}$. So computing al these coefficients can be done in time 
\[\tilde{O}(k\cdot (n(4\Delta)^{m})(e\Delta)^{m})=\tilde{O}(n(7\Delta)^{2m}).
\] 
Recall that this coefficient is 
\begin{equation*}
s_m(\nu_i)(G) = \sum_{j=1}^k \nu_i^{\gamma_j \circ h}s_{\gamma_j} = \sum_{j=1}^k \nu_i^{\gamma_j \circ h} \ind(\gamma_j H,G).
\end{equation*} 
More conveniently, writing $s \in\Z_{\geq 0}^k$ for the vector given by $s_j:=s_{\gamma_j}=(\ind(\gamma_jH, G))_{j=1}^k$, we have the invertible system of linear equations given by 
\[\nu_i^* \cdot s = s_m(\nu_i) \text{ for } i=1,\ldots,k,\]
where we have computed the values of $s_m(\nu_i)$ and $\nu_i^*$, while the vector $s$ is unknown (the system is invertible because we chose the $\nu_i^*$ to be linearly independent). 
We can then invert the system in time $O(k^3) = \tilde{O}(2^{6m})$. In particular finding the value of $s_1 = \ind(\rho H, G)$ can be done in $\tilde{O}(2^{6m})$.
The total running time is bounded by $\tilde{O}(n(7\Delta)^{2m}+2^{10m}))$.
\end{proof}

\section{Concluding remarks}\label{sec:conclusion}
As we remarked in the introduction our approach also works in the setting of vertex- and edge-coloured graphs.
We will not elaborate on the details here, but just refer the interested reader to Section 3.3 of \cite{PR16} where we have briefly explained how to extend the results for computing coefficients of BIGCPs to the setting of coloured graphs. 
In addition we note that the part of the proof given in Section~\ref{sec:proof main} also carries over to the coloured graphs setting replacing graph by coloured graph everywhere.

We moreover remark that the approach used to prove Theorem~\ref{thm:main} is very robust. 
Besides extending to the coloured setting, it also easily extends to other graph like structures. 
For example, in \cite{PR16} it has been extended to \emph{fragments}, i.e., vertex-coloured graphs in which some edges may be unfinished and more recently, Liu, Sinclair, and Srivastava \cite{LSS17} extended it to \emph{insects}, i.e., vertex-coloured hypergraphs in which some edges may be unfinished.
We expect our approach to be applicable to the problem of counting (induced) substructures in other structures as well, as long as there is a notion of connectedness and maximum degree.

\section*{Acknowledgements}
We thank John Lapinskas for raising a question about the complexity of our main algorithm in a previous version of this paper, which led to an improved running time. We also thank Radu Curticapean for informing us of some historical context to our result.

\bibliographystyle{abbrv}
\bibliography{approx}

\begin{thebibliography}{10}

\bibitem{A99}
N.~Alon and M.~Tarsi.
\newblock Combinatorial nullstellensatz.
\newblock {\em Combinatorics Probability and Computing}, 8(1):7--30, 1999.

\bibitem{B17}
A.~Barvinok.
\newblock {\em Combinatorics and Complexity of Partition Functions}, volume~30
  of {\em Algorithms and Combinatorics}.
\newblock Springer, 2017.

\bibitem{BCKL12}
C.~Borgs, J.~Chayes, J.~Kahn, and L.~Lov{\'a}sz.
\newblock Left and right convergence of graphs with bounded degree.
\newblock {\em Random Structures \& Algorithms}, 42(1):1--28, 2013.

\bibitem{CF12}
P.~Csikv{\'a}ri and P.~E. Frenkel.
\newblock Benjamini--schramm continuity of root moments of graph polynomials.
\newblock {\em European Journal of Combinatorics}, 52:302--320, 2016.

\bibitem{CDFGL17}
R.~Curticapean, H.~Dell, F.~V. Fomin, L.~A. Goldberg, and J.~Lapinskas.
\newblock A fixed-parameter perspective on {\#}{B}is.
\newblock {\em CoRR}, abs/1702.05543, 2017.

\bibitem{CDM17}
R.~Curticapean, H.~Dell, and D.~Marx.
\newblock Homomorphisms are a good basis for counting small subgraphs.
\newblock In {\em Proceedings of the 49th Annual {ACM} {SIGACT} Symposium on
  Theory of Computing, {STOC} 2017, Montreal, QC, Canada, June 19-23, 2017},
  pages 210--223, 2017.

\bibitem{DF95}
R.~G. Downey and M.~R. Fellows.
\newblock Fixed-parameter tractability and completeness {II:} on completeness
  for {W[1]}.
\newblock {\em Theor. Comput. Sci.}, 141(1{\&}2):109--131, 1995.

\bibitem{JS98}
D.~S. Johnson and M.~Szegedy.
\newblock What are the least tractable instances of max independent set?
\newblock In {\em Proceedings of the Tenth Annual {ACM-SIAM} Symposium on
  Discrete Algorithms, 17-19 January 1999, Baltimore, Maryland.}, pages
  927--928, 1999.

\bibitem{KKM00}
T.~Kloks, D.~Kratsch, and H.~M{\"{u}ller}.
\newblock Finding and counting small induced subgraphs efficiently.
\newblock {\em Inf. Process. Lett.}, 74:115--121, 1995.

\bibitem{LSS17}
J.~Liu, A.~Sinclair, and P.~Srivastava.
\newblock The {I}sing partition function: Zeros and deterministic
  approximation.
\newblock {\em arXiv preprint arXiv:1704.06493}, 2017.

\bibitem{L}
L.~Lov\'asz.
\newblock {\em Large networks and graph limits}, volume~60 of {\em American
  Mathematical Society Colloquium Publications}.
\newblock American Mathematical Society, Providence, RI, 2012.

\bibitem{NP85}
J.~Ne\v{s}et\v{r}il and S.~Poljak.
\newblock On the complexity of the subgraph problem.
\newblock {\em Comment. Math. Univ. Carolin.}, 26(2):415--419, 1985.

\bibitem{PR16}
V.~Patel and G.~Regts.
\newblock Deterministic polynomial-time approximation algorithms for partition
  functions and graph polynomials.
\newblock {\em Siam Journal on Computing}, to appear.

\bibitem{SS5}
A.~D. Scott and A.~D. Sokal.
\newblock The repulsive lattice gas, the independent-set polynomial, and the
  {L}ov{\'a}sz local lemma.
\newblock {\em Journal of Statistical Physics}, 118(5):1151--1261, 2005.

\end{thebibliography}

\end{document}